\numberwithin{equation}{section}
\numberwithin{table}{section}
\numberwithin{figure}{section}
\newtheorem{theorem2}{Theorem}[section]
\newtheorem{proposition}{Proposition}[section]
\newtheorem{lemma}{Lemma}[section]
\newtheorem{corollary}{Corollary}
\newtheorem{remark}{Remark}[section]
\newcommand\vv{\tilde{\nu}}
\newcommand\vvv{\hat{\nu}}
\newcommand\VV{\tilde{V}}
\newcommand\VVV{\hat{V}}
\newcommand\hab{\hat{\beta}}
\newcommand\SSS{\hat{S}}
\newcommand\cvp{\stackrel{p}{\longrightarrow}}
\newcommand\cvd{\stackrel{d}{\longrightarrow}}
\newcommand\inn[2]{\langle #1, #2 \rangle}
\begin{document}

\newcommand{\red}{\color{DarkRed}}
\newcommand\cov{\mathop{\text{cov}}}
\newcommand\var{\mathop{\text{var}}}
\newcommand\diag{\mathop{\text{diag}}}

\begin{frontmatter}
  \title{Extension of the Lagrange multiplier test for error
    cross-section independence to  large panels  with non normal    errors}
  \runtitle{Extension of the Lagrange multiplier test to general large panel models}
  
  \thankstext{m1}{Z. Li's research is partially supported by a
    National Natural Science Foundation of China grant (No. 11901492),
    and The CUHK-SZ Presidential Fund (PF01001160).}
  \thankstext{m3}{Jianfeng Yao's  research is partially supported by a Hong Kong SAR RGC
    Grant (GRF 17308920).}

  \begin{aug} 
    \author{\fnms{Zhaoyuan} \snm{Li}\thanksref{m1}\ead[label=e1]{lizhaoyuan@cuhk.edu.cn}}
    \and
    \author{\fnms{Jianfeng} \snm{Yao}\thanksref{m3}\ead[label=e3]{jeffyao@hku.hk}}

    \runauthor{Z. Li and J. Yao}
    
    \address{School of Data Science \\  The Chinese University of Hong Kong, Shenzhen \\ Shenzhen, China \\   \printead{e1}}
    \address{Department of Statistics and Actuarial Science\\
      The    University of Hong Kong\\
      Hong Kong SAR,~~~   China\\
      \printead{e3}}
  \end{aug} 

  \begin{abstract}
    This paper reexamines the seminal  Lagrange multiplier test  for
    cross-section independence in a large panel  model
    where both the number of cross-sectional units $n$ and
    the number of time series observations $T$
    can be large.
    The first contribution of the paper is an enlargement of the 
    test with two extensions: firstly the new asymptotic normality is
    derived
    in a  simultaneous  limiting scheme where the two dimensions
    $(n,T)$ tend to infinity
    with  comparable magnitudes; second, the result is valid for
    general error distribution (not necessarily normal).
    The second contribution of the  paper is a new test statistic
    based on the sum of the  fourth
    powers of cross-section correlations from OLS residuals, instead
    of their squares  used in the Lagrange multiplier statistic. This
    new test is  generally more powerful, and the improvement
    is particularly visible against alternatives with weak or sparse cross-section
    dependence.
    Both  simulation study and real data analysis are proposed to
    demonstrate the advantages of the enlarged Lagrange multiplier test and the
    power enhanced test in comparison with  the existing procedures.
  \end{abstract}
  
  \begin{keyword}[class=MSC]
    \kwd[JEL Classification: ]{C120}
  \end{keyword}
  \begin{keyword}
    \kwd{Cross-sectional dependence}
    \kwd{the Lagrange multiplier test}
    \kwd{Large panels}
    \kwd{Correlation matrix}
    \kwd{OLS residuals}
  \end{keyword}

\end{frontmatter}

\section{Introduction}\label{sec:intro}

Consider the fixed effects panel data model
\begin{eqnarray}\label{model}
  y_{it} = \alpha + x_{it}^\prime \beta +\mu_i +\nu_{it}, \quad \textrm{for} \ i=1,\ldots,n;\ t=1,\ldots,T,
\end{eqnarray}
where $i$ indexes the cross-sectional (individual) units, and $t$ the
time series observations. The dependent variable is $y_{it}$ and
$x_{it}$ denotes the exogenous regressors of dimension $k\times 1$,
$\beta$ is the corresponding $k\times 1$ vector of parameters, $\mu_i$
denotes the time-invariant individual effect which could be correlated with the regressors $x_{it}$. Throughout the paper the number of covariates $k$ is fixed while
the two dimensions $n$ and $T$ may grow to infinity.
A more general model is 
the following heterogeneous panel  model
\begin{eqnarray}\label{model_gene}
  y_{it} = x_{it}^\prime \beta_i +\nu_{it}, \ \textrm{for} \ i=1,\ldots,n;\ t=1,\ldots,T,
\end{eqnarray}
where the slope parameters, $\beta_i$, are  allowed to vary across $i$.

Our focus is to  test  the following cross-section independence
hypothesis, that is
\begin{equation}\label{hypothesis}
  H_0: \quad  \text{the model errors}
  ~~ \{\nu_{it}\} ~~ \text{are independent across the units~~}
  1\le i\le n.
\end{equation}

Testing such  cross-section independence is important because it is a
preliminary step for  many existing inference procedures for the panel model.
These procedures become biased or even inconsistent when
cross-sectional units are correlated, see for example
\citep{chudik2013large} for the case of commonly used panel unit root tests.

Several procedures exist in the literature for this independence test. A popular procedure  is  the Lagrange multiplier (LM) test proposed
by \cite{breusch1980lagrange}
based on the sum of squared
pair-wise sample correlation coefficients of the OLS residuals. Recent efforts have concentrated in large  panel models
where the panel size $n$ is large compared to the sample size $T$. It
has been observed that the LM test is particularly biased in
such large panels.
\cite{pesaran2008bias} found an approximation for the mean and
variance of the LM statistic and established its asymptotic normality
for their modified $LM_{adj}$ test under the following  {\em sequential limit
  scheme}
\[  \text{SEQ-L:}\qquad
T \to \infty \text{~~first,  followed by}~~     n\to \infty.
\]   
\cite{pesaran2004general} also proposed a  cross-sectional dependence (CD)
test using sum of (non-squared) sample correlations of the OLS residuals. \cite{pesaran2015testing} showed that the null of the CD test is
rather weak cross-sectional dependence as defined in
\cite{chudik2011}. Again the asymptotic normality for the test
statistics is derived under the sequential limit scheme.

When the panel size $n$ and the sample size $T$ are large while having
comparable magnitude, it has been argued in recent high-dimensional statistic literature that
more reliable asymptotic results can be found by considering a
{\em simultaneous limit scheme} \citep{YZB15-book}
\[
\text{SIM-L:}\qquad T \to \infty, n=n(T) \quad  \text{such that }
\lim_{T\to\infty} \frac{n}{T} = c>0.
\] 
Technically, the derivation of asymptotic normality for LM-type
statistics under the SIM-L scheme is challenging. Actually,   
\cite{baltagi2012lagrange}  succeeded  for the  fixed effects
homogeneous panel data model \eqref{model} by precisely identifying a
mean shift in the asymptotic distribution of the LM statistic due to
the SIM-L scheme.
This bias-corrected test, $LM_{bc}$,  however needs to assume the normality of
the errors $(\nu_{it})$ as its development relies  on a previous
result established in \cite{schott2005testing}
on sample correlations of  normal-distributed  errors.
As for the heterogeneous model \eqref{model_gene},
a recent work
\cite{jiangworking2020} establishes the asymptotic normality
of the LM test under the SIM-L scheme while also requiring
normally distributed errors $(\nu_{it})$   as well
as normally distributed regressors $(x_{it})$. This test,  
$LM_{RMT}$, is extensively compared to  the CD and  $LM_{adj}$ tests in various
dimension and panel size combinations via simulation. 
The $LM_{RMT}$ has been shown to be comparable to $LM_{adj}$ in terms
of size and power, with a slight preference for $LM_{RMT}$ when the
sample size $T$ is not very large (or when the panel size $n$ is
relatively large).
The CD test is universally correctly sized but generally lacks power
under local alternatives.  

An important aim of this paper to extend the LM test to large panel
models under the SIM-L scheme {\em and} without the normality assumption on the errors $(\nu_{it})$.
Because the normality assumption is fundamental to the analytic 
methods used in both the $LM_{bc}$ and $LM_{RMT}$ tests, we
achieve our goal by employing complete different tools and derive
a new asymptotic normal distribution for the LM statistic in
Theorem~\ref{th:trace_residual}.
A corresponding test $LM_e$ is thus derived for cross-section
independence for the panel model~\eqref{model}.
We also establish a surprising fact that this asymptotic distribution
for the $LM_e$ coincide with those of both the $LM_{bc}$ and
$LM_{RMT}$ tests although  the technical derivations of the three
tests are all different.  This universality of the asymptotic
distribution for the LM statistic under the SIM-L scheme ensures a
welcomed robustness of the LM test, thus extending its
application scope to various large panel models.

A second contribution from the paper is to go beyond the  LM
statistic by considering fourth powers of residual correlations
(instead of their squares in the LM statistic).  The idea here is that
fourth powers weight more heavily large correlations than
 smaller ones.  This magnification effect enhances the power of the
 test, particularly under an alternative where the cross-section dependence is weak or
sparse. Following the same setting and tools as for the LM statistic, we establish
the asymptotic normality for this  power-enhanced
test (PET).

Next we designed several simulation experiments following commonly
used settings in the large panel model literature. These experiments
confirm the excellent empirical performance of the new $LM_e$ test at
a level comparable to the tests $LM_{RMT}$  and $LM_{adj}$.  When it
comes to compare the powers of the tests, the power-enhanced
test PET indeed dominates all the existing tests discussed above. As
expected,  the 
advantage of the PET  test is particularly significant under  weak
or sparse dependence alternatives.

We have also conducted  a real data analysis to assess the properties of
these tests in real-life large panels. This findings basically confirm
the comparison results found through simulation experiments.


The rest of the paper is organised as follows. Section~\ref{sec:OLS}
presents OLS regression residuals in the panel model \eqref{model} and
the existing tests discussed above.  Sections~\ref{sec:newtest1} and \ref{sec:newtest2} 
introduce, respectively,   the extended LM test   $LM_e$ and the
power-enhanced test  PET,  and  establish their
asymptotic normality under the null and the SIM-L scheme without
assuming normality of the errors.    
Section~\ref{sec:simul} presents a detailed simulation study
for comparison of finite-sample performance of the various tests.
A real data analysis is carried out in Section~\ref{sec:realdata}.
Some discussions are offered in the last conclusion section.
All technical proofs are grouped to the Appendix.

\section{OLS regression residuals and existing tests}
\label{sec:OLS}
 
Consider the population correlation matrix
$R=[\diag(\Sigma)]^{-1/2} \Sigma  [\diag(\Sigma)]^{-1/2}$ of the
error vectors $\nu_t$.
Clearly, under the independence hypothesis~\eqref{hypothesis},
one has  $R=I$.
It is thus  natural to deign test statistics based on estimates for
these   error correlations.   Natural estimates for these correlations
are obtained using the residuals  from some  fitted panel regression
model. Precisely, 
consider   the centralised variables
\[ \tilde{y}_{it}=y_{it}-\frac{1}{T}\sum_{t=1}^T y_{it}, \quad
\tilde{x}_{it}=x_{it}-\frac{1}{T}\sum_{t=1}^Tx_{it}, \quad
\text{and}\quad 
\tilde{\nu}_{it}=\nu_{it}-\frac{1}{T}\sum_{t=1}^T \nu_{it}.
\]
The model~\eqref{model} takes a simpler form with these centralised
variables,
\[
\tilde y_{it} = \tilde  x_{it}^\prime \beta  +\tilde\nu_{it}, \quad
\textrm{for} \ 1\le i\le n;\ 1\le t\le T.
\]
The  OLS estimator of the regression  parameter $\beta$ in model \eqref{model} is 
\begin{eqnarray*}
  \hat{\beta}= \left(\sum_{t=1}^T \sum_{i=1}^n \tilde{x}_{it}\tilde{x}_{it}^\prime \right)^{-1} \left(\sum_{t=1}^T \sum_{i=1}^n \tilde{x}_{it}\tilde{y}_{it} \right).
\end{eqnarray*}
Introduce also the design matrices
\[
X_i=\begin{pmatrix}\tilde{x}_{i1},...,\tilde{x}_{it},...,\tilde{x}_{iT}\end{pmatrix}_{k\times  T},\quad 
X=\begin{pmatrix} X_1,X_2,...,X_n\end{pmatrix}_{k\times nT},
\]
and  the staked observation vectors,
\[
Y_i=\begin{pmatrix} \tilde{y}_{i1}, \cdots,\tilde{y}_{iT} \end{pmatrix}^{\prime},\quad
Y=\begin{pmatrix} Y^{\prime}_{1} , \cdots,Y^{\prime}_{n} \end{pmatrix}^{\prime}.
\]
Then $\hab=\left(XX^{\prime}\right)^{-1}XY$.  It is well known that
under fairly general assumptions on the
panel model and  the independence hypothesis $H_0$, $\hat{\beta}$ is a consistent estimator of $\beta$ no matter whether $n$ is fixed or tends to infinity jointly with $T$ \citep{baltagi2011testing,baltagi2017asymptotic}. The regression residuals $\hat{\nu}_{it}$ are 
\begin{eqnarray}\label{ols_residual}
  \hat{\nu}_{it} = \tilde{y}_{it} - \tilde{x}_{it}^\prime \hat{\beta} = \tilde{\nu}_{it}-\tilde{x}_{it}^\prime \left(\hat{\beta}-\beta \right).
\end{eqnarray}
Let  $\hat{\nu}_t = (\hat{\nu}_{1t},\ldots,\hat{\nu}_{nt})^\prime$ for
$t=1,\ldots, T$.
The {\em sample residual covariance matrix}  $\hat{S}_T$ and
the {\em sample residual correlation matrix}  $\hat{R}_T$ are
respectively,
\begin{equation}\label{hatD-T}
  \hat{S}_T = \frac{1}{T}\sum_{t=1}^T \hat{\nu}_t \hat{\nu}_t^\prime,
  \quad \text{and}\quad
  \hat{R}_T=\hat{D}_T^{-1/2}\hat{S}_T\hat{D}_T^{-1/2}, 
  \quad\text{with}\quad
  \hat{D}_T=\diag(\hat{S}_T).
\end{equation}
 The entries of  $\hat{R}_T $ are denoted as
  $\hat{R}_T=\{\hat\rho_{rs}, ~1\le r,s\le n\}$.

Introduce also  the 
{\em sample error covariance and correlation matrix} defined
analogously but with the centralised errors $\{\tilde{\nu}_t\}$, that is
\begin{equation}\label{D-T}
S_T=\frac{1}{T}\sum_{t=1}^T \tilde{\nu}_t \tilde{\nu}_t^\prime,
\quad \text{and}\quad
R_T=D_T^{-1/2}S_TD_T^{-1/2},
\quad\text{with}\quad D_T=\diag(S_T).
\end{equation}

\subsection{The Lagrange Multiplier test ($LM$)}
\label{LMtest}

This very  first test for cross-section  independence is  
proposed by \cite{breusch1980lagrange}  and uses the
statistic 
\begin{equation}\label{LM}
  LM = \frac{T}{2}\left\{tr (\hat{R}_T^2)-n\right\} =
  \frac{T}{2} \sum_{r\ne s} \hat\rho_{rs}^2.
\end{equation}
Note that   $\sum_{r\ne s} \hat\rho_{rs}^2=||\hat{R}_T-I||_F^2 $ is
the squared Frobenius distance between ${\hat R}_T$ and $I$,
the $LM$ statistic is indeed a scaled estimator for the population distance 
$||{R}-I||_F^2$ which is zero under the null hypothesis.

Consider  the following assumptions:
\begin{itemize}
	\item[(A1)] For each $i$, the disturbances, $\nu_{it}$, are serially independent with mean 0 and variance, $0<\sigma_i^2<\infty$;
	\item[(A2)] Under the null hypothesis defined by $H_0$ in (\ref{hypothesis}): $\nu_{it}=\sigma_i\varepsilon_{it},$ with $ \varepsilon_{it}\stackrel{i.i.d.}{\sim} N(0, 1)$ for all $i$ and $t$.
\end{itemize}

Under a {\em large sample limit scheme} where $n$ is fixed while
$T\to\infty$ and Assumptions (A1) and (A2), 
\cite{breusch1980lagrange} established that under the null,
the  $LM$ statistic  has an asymptotic $\chi^2_{n(n-1)/2}$ distribution.
It has been well discussed in the literature that this LM test  
based on its large sample limiting 
chi-squared distribution  suffers from severe size
distortions for large panels where $n$ is large compared to time series size $T$.

\subsection{The cross-section dependence test ($CD$)}

To accommodate large panels, 
\cite{pesaran2004general} proposed the following  CD statistic
\begin{eqnarray}\label{CD}
  CD  = \sqrt{\frac{T}{2n(n-1)}} \sum_{r\neq s}\hat{\rho}_{rs}. 
\end{eqnarray}
Consider  the following additional assumptions:
\begin{itemize}
	\item[(A3)]  The disturbances are $\nu_{it}=\sigma_i \varepsilon_{it}$, with $\varepsilon_{it}\stackrel{i.i.d.}{\sim} (0, 1)$ for all $i$ and $t$, the disturbances, $\varepsilon_{it}$, are symmetrically distributed around 0;
	\item[(A4)] The regressors, $x_{it}$, are strictly exogenous such that $E(\nu_{it}|X_i)=0$ for all $i$ and $t$, and $X_i^\prime X_i$ is a positive definite matrix;
	\item[(A5)] The OLS residuals, $\hat{\nu}_{it}$, defined by (\ref{ols_residual}), are not all zero.
\end{itemize}

Under the Assumptions
(A3)-(A4)-(A5), \cite{pesaran2004general} established that  the CD statistic is asymptotically standard normal
under the sequential limit scheme.
It is widely reported that the CD test enjoys a very accurate size in
general while suffering from certain loss of power when the panel-wise
correlations have varying signs.  This loss of power can be understood by the
fact that the CD statistic is averaging the sample residual correlations
$\{\hat{\rho}_{rs}\}$; so if they carry varying signs across  panel
units, this averaging may lead to certain cancellation of 
correlations of opposite signs,
thus neutralising the statistic and its power.
This fact is confirmed in one setting of the simulation experiments where
the  cross-sectional  dependence comes from   a factor model with mean zero factor loadings,
see Table~\ref{table2}.

\subsection{The bias-adjusted Lagrange Multiplier test ($LM_{adj}$)}
\label{ssec:LM_adj}

In order to adapt the LM test to large panels,  
\cite{pesaran2008bias} proposed the following   bias-adjusted version of the LM
statistic:
\begin{eqnarray}\label{LM_adj}
  LM_{adj} = \sqrt{\frac{1}{2n(n-1)}}\sum_{r\neq s}\frac{(T-k)\hat{\rho}_{rs}^2-\mu_{Trs}}{\sigma_{Trs}}, 
\end{eqnarray}
where 
\begin{eqnarray*}
  \mu_{Trs} &=& \frac{1}{T-k}tr(M_rM_s),\\
  \sigma_{Trs}^2 &=& [tr(M_rM_s)]^2 a_{1T}+ 2tr[(M_rM_s)^2]a_{2T},\\
  M_r&=&I_T - X_r^\prime(X_rX_r^\prime)^{-1}X_r,
  \qquad  X_r = (x_{r1}, \ldots, x_{rT}), \\ 
  a_{1T}&=&a_{2T}-\frac{1}{(T-k)^2}, \quad a_{2T}=\frac{3}{(T-k+2)^2}. 
\end{eqnarray*}
Under the assumptions (A1)-(A2)-(A4) (which imply the null hypothesis),
the authors proved that under the SEQ-L scheme,
$LM_{adj}$ is asymptotically standard normal. This bias-adjusted test
  indeed perform much better in large panels than the original LM
  test.
  The only known issue on this test is that because of the employed
  sequential limiting scheme, the test may be over-conservative in
  ``micro-panels'' where $T$ is quite limited compared to panel size
  $n$.

\subsection{A bias-corrected Lagrange Multiplier test ($LM_{bc}$)}
\label{sec:LM_bc}

\cite{pesaran2004general} also suggested the following scaled version
of the LM test for large panels:
\[
LM_P=CD_{lm}=\sqrt{\frac{1}{4n(n-1)}} \sum_{r\neq s}\left(T
\hat{\rho}_{rs}^{2}-1\right).
\]
Under the Assumptions (A1)-(A2) and the SEQ-L  scheme,
$LM_P$ is shown to have a standard normal limiting distribution.
 An issue with the test statistic as mentioned in
\cite{pesaran2004general} is that when $T$ is not large, the
scale adjustment made in $LM_P$ may not be accurate, thus
exhibiting  substantial size distortions.
Consequently, 
\cite{baltagi2012lagrange} considered the SEQ-L  scheme
where  a bias-corrected version was
found for the fixed effects model \eqref{model}.
Precisely, the authors established that 
under assumptions (A1)-(A2)-(A4) and the SEQ-L  scheme, 
\begin{eqnarray}\label{LM_bc}
  LM_{bc}:=LM_p-\frac{n}{2(T-1)}\stackrel{D}{\longrightarrow} N(0,1).
\end{eqnarray}
It is remarkable that the asymptotic bias in $LM_P$ is exactly
identified as $n/2(T-1)$, and the
authors demonstrated that this bias is caused  by the fact that the sample
correlations are calculated using OLS residuals,  instead of 
the (unobserved) model
errors. Note  that this bias  disappears 
in the traditional large sample scheme where the ratio $n/T\to 0$,
thus emphasising the particular large panel effect.  
The derivation of   \eqref{LM_bc}  uses the asymptotic results on 
sample error correlation matrix in \cite{schott2005testing}, thus
requiring the normality assumption of the errors, see Assumption (A2).

Simulation experiments in \cite{baltagi2012lagrange} show that the
bias-corrected $LM_{bc}$  test has an excellent finite sample performance in
comparison to the CD test and the $LM_{adj}$  test. It is particularly
recommendable for micro-panels where $T$ is relatively small.

\subsection{A Gaussian  Lagrange-Multiplier test for large panels ($LM_{RMT}$)}

Based on asymptotic results from random matrix theory literature,
\cite{jiangworking2020} introduced
another limiting distribution for the LM statistic
under the SIM-L  scheme and  for the heterogeneous panel data
\eqref{model_gene}.
Precisely, they considered 
the statistic,
\begin{equation}
  LM_{RMT}=\frac{tr(\hat{R}_T^2)-\mu_{rmt}}{\sigma_{rmt}},
\end{equation}
where, setting $c_T=n/T$ and  $\kappa = \frac{3T(T-k+2)}{(T+2)(T-k)}$, 
\begin{eqnarray*}
	\mu_{rmt} &=& n(1+c_T)+ c_T^2-c_T,\\
	\sigma_{rmt}^2 &=& 4c_T(1+2c_T)(c_T+2)-4(\kappa-1)c_T(1+c_T)^2+(\kappa-3)c_T(c_T-4)^2(c_T+1)^2.
\end{eqnarray*}

Consider the following  assumptions.
\begin{itemize}
	\item[(A6)] Within each unit $i$, the regression vectors $\{x_{it},1\leq t\leq T \}$ are i.i.d. $k$-variate normal $N(\mathbf{0}, \mathbf{\Lambda}_i)$ (centred with covariance matrix $\mathbf{\Lambda}_i$).
	\item[(A7)]  The covariates $\{x_{it}\}$ and the errors    $\{\nu_{it}\}$ are independent.
\end{itemize}
Under the assumptions (A1)-(A2) and (A6)-(A7), the null and  the SIM-L scheme,
\cite{jiangworking2020} established 
\begin{equation}
  LM_{RMT}   \stackrel{D}{\longrightarrow} N(0,1).
  \label{LMRMT_normal}
\end{equation}
The simulation experiments in this reference showed  that $LM_{RMT}$
is generally comparable to $LM_{adj}$ in terms
of size and power, with however a slight preference for $LM_{RMT}$
when the sample size $T$ is relatively small.
This result   requires normality for 
for both the model errors and
the  regressors, see Assumption (A2) and (A6).

\section{An extended  Lagrange multiplier test for large panels
  without normality assumption ($LM_e$)}\label{sec:newtest1}

As a main result of the present paper, 
we extend the test $LM_{RMT}$ to large panels where the errors are not
necessarily normal distributed. 
Because in large panels, most of the previous results on the Lagrange multiplier
rely on the normality assumption, the derivation of the asymptotic
distribution of the LM statistic here requires new tools. This is
achieved via 
recent results on sample correlation matrix from random matrix theory literature.

To introduce
the conditions on the centralised design matrices $X_i$, we need to
consider $E_k$, the subspace of $R^k$ which is orthogonal to the
constant vector ${\mathbf 1}_k$ (with all coordinates equal to 1),
that is,
\[   E_k =\{ u=(u_1,\ldots,u_k)\in R^k:~~ u_1+ \ldots+u_k=0.\}
\]

We will use the following assumptions on the panel model.

\begin{enumerate}
\item[(B1)] 
  The panel-wise error vectors $\nu_1,\ldots, \nu_T$ are i.i.d. with
  mean zero and uniformly bounder sixth moments, that is
  \[
  \sup_{i,t} E|\nu_{it}|^6\le C_2, \quad
  \text{for some positive constant } C_2.
  \]
\item[(B2)] 
  (i) The regressors $\{x_{it} \}$ are independent of the idiosyncratic
  disturbances $\{\nu_{it} \}$.

  (ii) There are positive  constants $a_1$ and $a_2$   such that  for all
  $i$, $T$ and non zero $u\in E_k$,
  \[ a_1 \|u\|^2 \le   \frac1T u' X_iX_i'u  \le  a_2 \|u\|^2 .
  \]

  (iii)
  For any non zero  vector $u\in E_k$ and $i$,
  \[
  \frac{\max\limits_{1\le t\le T}  \inn{u}{\tilde x_{it}}^2}{ u'X_iX_i'u} \to0,\quad   T\to\infty.
  \]
\end{enumerate}

The Assumption (B1) is a  standard moment condition on the errors
$\{\nu_{it} \}$ which are  serially uncorrelated
over time with a constant cross-section  covariance matrix $\Sigma=\cov(\nu_t)$.
The Assumption (B2) ensures the regularity of the design matrices with
the centralised regression variables $\{\tilde x_{it}\}$.
The independence between regressors and errors required in
Assumption~(B2) is slightly stronger than the strict
exogeneousity in Assumption (A4).

\begin{theorem2}\label{th:trace_residual}
Suppose Assumptions  (B1) and (B2)  hold for the panel data model
(\ref{model}). Then under the SIM-L scheme and the null hypothesis, 
\begin{eqnarray}\label{LM_c}
  LM_e:=\frac{tr(\hat{R}_{T}^2)-\mu_{LM_e}}{\sigma_{LM_e}}
  \stackrel{D}{\longrightarrow} N(0,1), 
\end{eqnarray}
with
\begin{eqnarray}
  && \mu_{LM_e} = n(1+c_T)+c_T^2-c_T, \label{mu_LM}\\
  && \sigma_{LM_e}^2=4c_T^2.  \label{sigma_LM}
\end{eqnarray}
\end{theorem2}

This result is established in two steps. Using recent results from
random matrix theory,  we first find that
\begin{equation}
  \frac{tr({R}_{T})-\mu_{LM_e}}{\sigma_{LM_e}}
  \stackrel{D}{\longrightarrow} N(0,1).
  \label{trace_error}
\end{equation}
Next, we show that 
\begin{equation}
  tr (\hat R_T^2)-  tr(R_T^2) = o_p(1).
  \label{trace_residual}
\end{equation}
The first step result \eqref{trace_error} is justified in
Appendix~\ref{app:trace1}. The justification of the second step result
\eqref{trace_residual} requires more calculations; they are detailed
in Appendix~\ref{sec:trace4}, see Proposition~\ref{trace2_prop}.

%
%

We now compare the new test $LM_e$ to two existing tests also
developed in the SIM-L scheme.  With respect to the bias-corrected
test $LM_{bc}$, note that 
the $LM_p$ statistic (Section~\ref{sec:LM_bc})  can be rewritten as
\[  LM_p = 
\frac{ tr(\hat{R}_T^2)-\mu_{LM_p} }{ \sigma_{LM_p}}
\]
with
\[ 
\mu_{LM_p}= n(1+c_T)-c_T+o(1), \quad
\text{and}\quad 
\sigma_{LM_p}^2=4c_T^2=\sigma_{LM_e}^2.
\]
We have 
\[
\frac{tr(\hat{R}_T^2)-\mu_{LM_e}}{2c_T}=\frac{tr(\hat{R}_T^2)-\mu_{LM_p}-c_T^2}{2c_T}
= LM_P -\frac{c_T}{2}.
\]
Thus by  Theorem~\ref{th:trace_residual}, 
\[LM_P -\frac{c_T}{2}  \stackrel{D}{\longrightarrow} N(0,1), \]
under the SIM-L scheme.  This coincide with the asymptotic normality
of $LM_{bc}$ given in  (\ref{LM_bc}). 
Therefore, our $LM_e$ test can be seen as an extension of the
$LM_{bc}$ test to panels with non normal-distributed errors.

Next we compare the $LM_{e}$ test  to the  result \eqref{LMRMT_normal} for
the Gaussian $LM_{RMT}$   test.   We  find that $\mu_{rmt}=\mu_{LM_e}$, and
$\sigma_{rmt}^2\sim  \sigma_{LM_e}^2\sim 4c^2$ for large $T$ (and
$n$) since
as $\kappa\to 3$ when $k$ is fixed and $T\to \infty$.
This  means that  
under the SIM-L scheme,
the asymptotic  distribution of $tr(\hat{R}_{T}^2)$
derived in Theorem~\ref{th:trace_residual} for the extended LM test
$LM_e$ 
is also valid for the heterogeneous model (\ref{model_gene})
assuming normality of the errors (and the regression variables).

In conclusion of these comparisons, 
the asymptotic  distribution of $tr(\hat{R}_{T}^2)$ 
derived in Theorem~\ref{th:trace_residual} has a welcomed
universality, being valid for the three tests $LM_{e}$, $LM_{adj}$ and
$LM_{RMT}$ which cover quite different large panel models. Such
distributional robustness enlarges the application scope of the LM
statistic to various large panel models .

\section{A power enhanced test for large panels}\label{sec:newtest2}

Anticipating the simulation results shown in
Section~\ref{sec:simul}, the various large-panel versions of the LM
test, namely $LM_{adj}$, $LM_{bc}$, $LM_{RMT}$ as well  as the new
test $LM_e$ may suffer from the problem of low power against large
panels where the units are weakly dependent. Such weak dependence
arises for example when the cross-sectional correlation matrix $R$
is sparse. Recent literature in high-dimensional statistics
indicates that an efficient way for detection of sparse correlations
is to weight those relatively significant sample correlations
$\hat\rho_{rs}$ more heavily 
than those small sample correlations.  An extreme method in this
regard is for example to take the overall maximum
$\max\limits_{r\ne s}|\hat\rho_{rs}|$ as a test statistic.
It is however unclear in the current SIM-L large panel setting how to
derive a limiting distribution for such maximum type statistic. Here
we propose a manageable compromise by considering the sum of 
fourth powers of the sample residuals correlations, that is, to consider
\[ 
tr(\hat{R}_T^4)=\sum\limits_{r\ne s} {\hat\rho}^4_{rs},
\]
as the new test statistic. The rational here is that 
compared to the sum of the squares $\{ {\hat\rho}^4_{rs}\}$  used
in the LM  statistic,
$tr(\hat{R}^4)$ is weighting more heavily
larger sample correlations than  smaller
ones.  This will enhance the power of the test when either very few sample
correlations are significantly non zero, or they are many but with
relatively small amplitudes.
Such situations arise under an alternative with sparse cross-section
dependence,
or with globally weak cross-dependence.

We derive the asymptotic normality of $tr(\hat{R}_T^4)$ in the
following theorem.

\begin{theorem2}\label{PET_residual}
  Suppose Assumptions~(B1) and (B2)  hold for the panel data
  model (\ref{model}).
  Then under the SIM-L  scheme and under the null $H_0$ in  (\ref{hypothesis}),
  \begin{eqnarray}\label{PET}
	\frac{tr(\hat{R}_T^4)-\mu_{PET}}{\sigma_{PET}} \stackrel{D}{\longrightarrow} N(0, 1).
  \end{eqnarray}
  with
  \begin{eqnarray}
    \mu_{PET} &=& n\left(1+\frac{6n}{T-1}+\frac{6n^2}{(T-1)^2}+\frac{n^3}{(T-1)^3} \right) -6c_T(1+c_T)^2-2c_T^2, \label{mu_PET}\\ [1mm]
    \sigma_{PET}^2 &=& 8c_T^4+96c_T^3(1+c_T)^2+16c_T^2(3c_T^2+8c_T+3)^2 \label{sigma_PET}.
  \end{eqnarray}
\end{theorem2}

The limiting normality of $tr(\hat{R}^4)$ in (\ref{PET}) under the
null allows us to perform a level-$\alpha$ test for  the null
hypothesis $H_0$.
This test is hereafter referred as the {\em power enhanced  test
  (PET)} for cross-section independence in large panels.

The proof of Theorem~\ref{PET_residual} follows the strategy for that
of Theorem~\ref{th:trace_residual} and also proceeds in two steps.
In the first step, using recent results from
random matrix theory,  we  find that
\begin{equation}
  \frac{tr({R^4}_{T})-\mu_{PET}}{\sigma_{PET}}
  \stackrel{D}{\longrightarrow} N(0,1).
  \label{PET_error}
\end{equation}
Next, we show that 
\begin{equation}
  tr (\hat R_T^4)-  tr(R_T^4) = o_p(1).
  \label{trace4_diff}
\end{equation}
The first step result \eqref{PET_error} is justified in
Appendix~\ref{sec:PET}. The justification of the second step result
\eqref{trace4_diff} also requires more calculations; they are detailed
in Appendix~\ref{sec:trace4}, see Proposition~\ref{PET_prop}.


%

%

%
\section{Simulation studies}
\label{sec:simul}

We conduct a  simulation study to investigate the finite
sample performance of the proposed  tests  $LM_e$ and $PET$. Comparisons are made with
the bias-adjusted LM test ($LM_{adj}$) in (\ref{LM_adj}) and the
cross-section dependence test (CD) in (\ref{CD}).
The original Lagrange Multiplier test (LM) in (\ref{LM}) is
  excluded due to its well-known non applicability to large panels.
The Gaussian high-dimensional Lagrange Multiplier test $LM_{RMT}$ is
also excluded in comparison  as it is equivalent to our proposed
general Lagrange multiplier test ($LM_e$), see comments after
Theorem~\ref{th:trace_residual}.

\subsection{Empirical sizes of the tests}
We consider the following data generating process proposed in  \cite{pesaran2008bias}:
\begin{eqnarray}\label{data_generate}
y_{it}=a +\sum_{l=2}^k x_{lit}\beta_{l}+\mu_i+\nu_{it},\quad  1\leq i\leq n; ~1\leq t\leq T, 
\end{eqnarray}
where $a$ and $\beta_l$ are set arbitrarily to 1 and $l$, respectively, $\mu_i\stackrel{i.i.d.}{\sim}N(1,1)$ and here $k$ is the number of regressors including the intercept $a$. The regressors are generated as
\begin{eqnarray}\label{regressor}
x_{lit}=0.6 x_{li,t-1}+\sigma_{li}u_{lit}, \quad i=1,2,\ldots,n; ~t=-49, \ldots, 0,\ldots, T; ~l=2,\ldots,k,
\end{eqnarray}
with $x_{li,-50}=0$, $\sigma_{li}^2=\tau_{li}^2/(1-0.6^2)$, $\tau_{li}^2 \stackrel{i.i.d.}{\sim} \chi^2(6)/6$ and $u_{lit}\stackrel{i.i.d.}{\sim} N(0,1)$. The first 50 observations in (\ref{regressor}) are disregarded. 
Under the null, the errors $\nu_{it}$ are assumed to be i.i.d. across
individuals and over time, that take the form 
\[\nu_{it}=\sigma_i \varepsilon_{it}, \quad 1\leq i\leq n; ~1\leq t\leq T, \]
where $\sigma_i^2\stackrel{i.i.d.}{\sim}\chi^2(2)/2$ and the
$\varepsilon_{it}$'s are {also i.i.d.} and  generated from different
populations: (i) normal, $N(0,1)$, (ii) Student-t, $t_7/\sqrt{7/5}$,
and (iii) chi-square, $(\chi^2(5)-5)/\sqrt{10}$. The normalisation's in
(ii) and (iii) are  such that these  variable have  mean zero and unit
variance. 

We explore the performance of different  tests using various  combination
of $(n,T)$
with $n\in\{50, 100, 200\}$ and $T\in\{50, 100\}$. Empirical sizes and
powers of the tests are evaluated from  2,000 independent replications. The nominal test level is 5\%.

\begin{table}[ht]
	\centering
	\caption{Empirical sizes of tests (in \%).}
	\label{table1}
	\begin{tabular}{|c|cr|ccc|ccc| }
		\hline 
	& & $k$ & \multicolumn{3}{c}{2} & \multicolumn{3}{c|}{4}\\ \cline{2-9}
	& T & n & 50 & 100 & 200 & 50 & 100 & 200\\ \hline 
	\multirow{10}{*}{Normal}&  \multirow{5}{*}{50} & $LM_e$ & 5.00 & 5.05 & 4.70 & 4.80 & 6.05 & 5.55 \\ 
	 & & PET & 5.25 & 5.65 & 5.10 & 4.70 & 5.55 & 4.70 \\
	&  & $LM_{adj}$ & 5.20 & 5.15 & 4.55 & 4.20 & 4.55 & 2.80 \\ 
	& & CD &  5.45 & 4.95 & 5.10 & 4.95 & 5.15 & 5.30 \\ \cline{2-9}
	& \multirow{5}{*}{100} & $LM_e$ & 5.45 & 5.00 & 5.30 & 4.65 & 5.25 & 5.40 \\
	& & PET & 4.80 & 4.70 & 5.40 & 4.25 & 5.70 & 5.10 \\ 
	& & $LM_{adj}$ & 5.75 & 5.05 & 5.35 & 4.70 & 5.05 & 4.30 \\ 
	& & CD & 4.45 & 5.50 & 5.40 & 5.20 & 4.85 & 5.15 \\ \hline
	\multirow{10}{*}{Student-t}& \multirow{5}{*}{50} &$LM_e$ & 5.20 & 4.90 & 4.70 & 5.15 & 5.55 & 5.35 \\
	& & PET & 5.45 & 6.20 & 5.05 & 4.75 & 5.90 & 5.20 \\ 
	& & $LM_{adj}$ & 5.55 & 5.05 & 4.20 & 4.80 & 4.10 & 2.65 \\ 
	& & CD & 5.55 & 5.35 & 4.45 & 5.10 & 5.05 & 5.55  \\ \cline{2-9}
	& \multirow{5}{*}{100} & $LM_e$& 5.00 & 4.45 & 5.50 & 4.80 & 4.90 & 5.60  \\ 
	& & PET & 4.40 & 4.25 & 5.50 & 4.95 & 5.00 & 5.35  \\ 
	& & $LM_{adj}$ & 5.30 & 4.65 & 5.50 & 4.90 & 4.75 & 5.00 \\ 
	& & CD & 4.55 & 5.30 & 5.30 & 4.80 & 5.25 & 4.60\\ \hline 
	\multirow{10}{*}{Chi-square}& \multirow{5}{*}{50} & $LM_e$ & 4.85 & 6.45 & 5.65 & 5.75 & 5.35 & 5.10 \\
	& & PET &4.90 & 6.55 & 5.25 & 5.70 & 5.65 & 4.95 \\
	& & $LM_{adj}$ & 5.45 & 6.45 & 5.30 & 5.40 & 4.15 & 2.35 \\
	& & CD & 5.20 & 4.85 & 4.90 & 5.05 & 5.00 & 4.85 \\ \cline{2-9}
   &	\multirow{5}{*}{100} & $LM_e$ & 4.70 & 5.25 & 6.35 & 6.05 & 5.15 & 6.15 \\
   	& & PET & 4.85 & 4.90 & 5.90 & 5.05 & 5.05 & 5.80 \\
	& & $LM_{adj}$ & 5.15 & 5.40 & 6.35 & 6.40 & 4.80 & 5.25 \\
	& & CD & 4.70 & 4.75 & 5.95 & 6.10 & 4.90 & 4.90 \\
	\hline
	\end{tabular}
\end{table}

Table~\ref{table1} presents the empirical sizes of $LM_e$, PET,
$LM_{adj}$ and CD tests (values close to 5\% are better) under three
different error distributions.  The proposed PET test and the CD test
have a similar  performance  a little better than $LM_e$, which is in
trun slightly better than $LM_{adj}$. However, the $LM_{adj}$
test is noticeably computationally more demanding than the other
tests. It also  has  a large downside size distortion under the cases with $T=50$, $n=200$ and $k=4$.

\subsection{Empirical powers of the tests}

To evaluate  the power of the tests considered, the disturbances are
generated according to the following single-factor model,
\[\nu_{it} =  \lambda_{i}f_{t} +\epsilon_{it},\quad 1\leq i\leq n; ~1\leq t\leq T, \]
where $\lambda_{i}$ is the factor loading of individual $i$ for the
common factor $f_{t}$ in period $t$, with
$f_t\stackrel{i.i.d.}{\sim}N(0,1)$.
The factor loadings are constructed under   three different scenarios:
\begin{itemize}
\item Dense case:
   here the strength of cross-sectional correlation is measured
    by a positive parameter $h>0$. Given $h$,
  $\lambda_i\stackrel{i.i.d.}{\sim} U[-b,b]$ for $i=1,\ldots,n$, where
  $b=\sqrt{3h/n}$ (the average of the squared length
  $\lambda_1^2+\cdots+\lambda_n^2$ is thus $h$);
\item Sparse case: $\lambda_i\stackrel{i.i.d.}{\sim}U(0.5,1.5)$ for
  $i=1,2,\ldots, [n^{0.3}]$ and $\lambda_i=0$ for $i=[n^{0.3}]+1,
  \ldots,n$, where $[n^{0.3}]$ is the integer part of $n^{0.3}$.
  We have $[n^{0.3}]= 3, 3, 4$ for $n=50, 100, 200$, respectively.
	\item Less-sparse case:
      $\lambda_i\stackrel{i.i.d.}{\sim}U(0.5,1.5)$ for $i=1,2,\ldots,
      [n^{0.5}]$ and $\lambda_i=0$ for $i=[n^{0.5}]+1, \ldots,n$.
      We      have  $[n^{0.5}]= 7, 10, 14$ for $n=50, 100, 200$, respectively.
\end{itemize}
In the dense case, all cross-sectional units are correlated.
 The correlation between units $i$ and $i'$ is $\lambda_i\lambda_{i'}$
  and the overall strength of correlation is controlled by $h$.
We study the empirical powers of the tests when $h$ varies  while
remaining bounded, that is, $h/n\to 0$.  This thus corresponds to the
setting of weak factor alternative used in
\cite{baltagi2017asymptotic}. In the  sparse case, only a few, about
$n^{0.3}$, cross-sectional units are correlated while other units are
uncorrelated.
When this  number of correlated cross-sectional units increases to
$n^{0.5}$,  we call it a less-sparse case.

Table~\ref{table2}  shows the empirical powers of the tests in the 
dense case. The dimensions $n$, $T$ and $k$ are set to be (50, 100), 100 and 2,
respectively. The cross-sectional correlation strength  $h$ varies
from 1 to 7.  As expected, the empirical powers of all tests increase
with the strength  $h$.  The PET test largely outperforms the
others. Compared to $LM_e$ and $LM_{adj}$, PET indeed boosts the power
by up to 36\% for cases with small value of $h$. $LM_{adj}$
performs a little better than $LM_e$. The CD test has very low powers, confirming the fact that its implicit null  is rather weak
cross-sectional dependence \citep{pesaran2004general}.
Plots at the bottom of Table~\ref{table2}  illustrate the evolution of
these powers in function of the varying strength $h$
and for the cases with chi-square distributed errors.

Tables~\ref{table3} and \ref{table4} show the empirical powers for the
sparse and the less-sparse cases, respectively. As expected,  all the tests have
higher power  in the less-sparse case  than in the sparse cases.
The proposed PET test again performs best. It  boosts the power by up
to 16\% under these sparse cases as  compared to $LM_e$ and
$LM_{adj}$. The proposed $LM_e$ test performs better than $LM_{adj}$.
The latter  has   low powers in the  cases with $T=50$ and $k=4$,
which is consistent with the conservative sizes observed in
Table~\ref{table1}. The CD test has no powers for the  sparse case,
but has some powers for the less-sparse case; it has  however an
overall poor performance compared to the other tests.

\begin{table}[ht]
\centering
\caption{Empirical powers of tests under dense case with  $n=50, 100$, $T=100$, $k=2$ and varying $h$.}
\label{table2}
\begin{tabular}{|c|r|ccccccc|}
	\hline 
	n & $h$ & 1 & 2 & 3 & 4 & 5 & 6 & 7 \\
	\hline 
	\multirow{15}{*}{50} & & \multicolumn{7}{c|}{Normal}\\ \cline{2-9}
 &	$LM_e$ & 0.3610 & 0.8445 & 0.9930 & 1 & 1 & 1 & 1  \\
&	PET & 0.4800 & 0.9375 & 0.9990 & 1 & 1 & 1 & 1\\
&	$LM_{adj}$ & 0.3685 & 0.8475 & 0.9935 & 1 & 1 & 1 & 1 \\
&	CD & 0.0580 & 0.0505 & 0.0480 & 0.0485 & 0.0465 & 0.0450 & 0.0455 \\
	\cline{2-9}
&	& \multicolumn{7}{c|}{Student-t}\\ \cline{2-9}
&	$LM_e$ & 0.3060 & 0.8245 & 0.9910 & 1 & 1 & 1 & 1  \\
&    PET & 0.4095 & 0.9195 & 0.9990 & 1 & 1 & 1 & 1\\
&	$LM_{adj}$ & 0.3155 & 0.8315 & 0.9915 & 1 & 1 & 1 & 1 \\
&	CD & 0.0490 & 0.0475 & 0.0500 & 0.0595 & 0.0725 & 0.0850 & 0.0960  \\
	\cline{2-9}
&	& \multicolumn{7}{c|}{Chi-square}\\ \cline{2-9}
&	$LM_e$ & 0.3415 & 0.8490 & 0.9920 & 1 & 1 & 1 & 1  \\
&	PET & 0.4510 & 0.9400 & 0.9990 & 1 & 1 & 1 & 1 \\
&	$LM_{adj}$ & 0.3550 & 0.8545 & 0.9920 & 1 & 1 & 1 & 1 \\
&	CD & 0.0475 & 0.0450 & 0.0430 & 0.0525 & 0.0620 & 0.0760 & 0.0930  \\
	\hline 
	\multirow{15}{*}{100} & & \multicolumn{7}{c|}{Normal}\\ \cline{2-9}
	&	$LM_e$ & 0.1255 & 0.5035 & 0.9005 & 0.9860 & 0.9990 & 1 & 1  \\
	&	PET & 0.1500 & 0.6765 & 0.9600 & 0.9990 & 1 & 1 & 1 \\
	&	$LM_{adj}$ & 0.1285 & 0.5080 & 0.9040 & 0.9860 & 0.9990 & 1 & 1 \\
	&	CD & 0.0520 & 0.0440 & 0.0460 & 0.0475 & 0.0510 & 0.0555 & 0.0570 \\
	\cline{2-9}
	&	& \multicolumn{7}{c|}{Student-t}\\ \cline{2-9}
	&	$LM_e$ & 0.1140 & 0.5285 & 0.9205 & 0.9910 & 0.9995 & 1 & 1  \\
	&    PET & 0.1380 & 0.7205 & 0.9795 & 0.9990 & 1 & 1 & 1 \\
	&	$LM_{adj}$ & 0.1185 & 0.5350 & 0.9220 & 0.9910 & 0.9995 & 1 & 1 \\
	&	CD & 0.0495 & 0.0445 & 0.0460 & 0.0480 & 0.0480 & 0.0490 & 0.0510 \\
	\cline{2-9}
	&	& \multicolumn{7}{c|}{Chi-square}\\ \cline{2-9}
	&	$LM_e$ & 0.1140 & 0.5730 & 0.9355 & 0.9970 & 1 & 1 & 1  \\
	&	PET & 0.1490 & 0.7525 & 0.9850 & 0.9995 & 1 & 1 & 1 \\
	&	$LM_{adj}$ & 0.1195 & 0.5835 & 0.9380 & 0.9975 & 1 & 1 & 1 \\
	&	CD & 0.0580 & 0.0380 & 0.0440 & 0.0445 & 0.0480 & 0.0485 & 0.0515  \\
	\hline 
\end{tabular}
\vskip2mm Plots for the Chi-square cases \\
\includegraphics[width=0.43\textwidth]{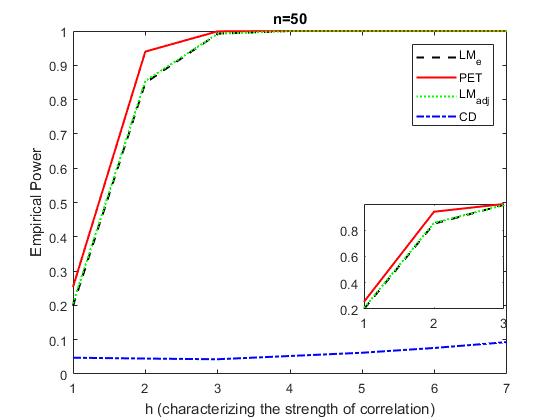}
\includegraphics[width=0.43\textwidth]{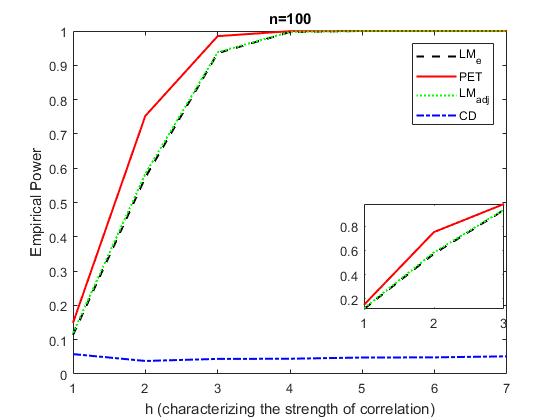}
\end{table}

\begin{table}[ht]
	\centering
	\caption{Empirical powers of tests under sparse case.}
	\label{table3}
	\begin{tabular}{|cc|ccc|ccc|ccc|}
		\hline 
		& & \multicolumn{3}{c|}{Normal} & \multicolumn{3}{c|}{Student-t} & \multicolumn{3}{c|}{Chi-square}\\ \hline
		T & n & 50 & 100 & 200 & 50 & 100 & 200 & 50 & 100 & 200 \\ \hline
		\multicolumn{11}{c}{$k=2$}\\ \hline 
		\multirow{4}{*}{50} & $LM_e$ & 0.1990 & 0.1195 & 0.1005 & 0.2010 & 0.1005 & 0.0955 & 0.1415 & 0.1315 & 0.1125  \\
		& PET & 0.2195 & 0.1240 & 0.1035 & 0.2175 & 0.1065 & 0.0920 & 0.1370 & 0.1355 & 0.1160  \\
		& $LM_{adj}$ & 0.2085 & 0.1230 & 0.0955 & 0.2130 & 0.1010 & 0.0915 & 0.1480 & 0.1345 & 0.1075 \\
		& CD & 0.0830 & 0.0620 & 0.0660 & 0.0790 & 0.0605 & 0.0640 & 0.0690 & 0.0620 & 0.0580 \\ \hline
		\multirow{4}{*}{100} & $LM_e$ & 0.5735 & 0.2045 & 0.1805 & 0.4320 & 0.1970 & 0.2095 & 0.1695 & 0.1205 & 0.1165  \\
		& PET & 0.6625 & 0.2295 & 0.1960 & 0.5050 & 0.2225 & 0.2360 & 0.1810 & 0.1315 & 0.1255  \\
		& $LM_{adj}$ & 0.5815 & 0.2095 & 0.1815 & 0.4455 & 0.2025 & 0.2110 & 0.1790 & 0.1255 & 0.1175 \\
		& CD & 0.0865 & 0.0650 & 0.0660 & 0.0785 & 0.0620 & 0.0660 & 0.0655 & 0.0525 & 0.0510 \\ \hline 
		\multicolumn{11}{c}{$k=4$}\\ \hline 
		\multirow{4}{*}{50} & $LM_e$ & 0.1405 & 0.0755 & 0.0840 & 0.1345 & 0.1370 & 0.1095 & 0.2300 & 0.1160 & 0.1095  \\
		& PET & 0.1475 & 0.0835 & 0.0970 & 0.1445 & 0.1475 & 0.1125 & 0.2425 & 0.1145 & 0.1125  \\
		& $LM_{adj}$ & 0.1320 & 0.0570 & 0.0495 & 0.1270 & 0.1040 & 0.0570 & 0.2240 & 0.0915 & 0.0580 \\
		& CD & 0.0660 & 0.0660 & 0.0580 & 0.0690 & 0.0685 & 0.0580 & 0.0755 & 0.0580 & 0.0580 \\ \hline
		\multirow{4}{*}{100} & $LM_e$ & 0.2890 & 0.1210 & 0.1345  & 0.4655 & 0.1460 & 0.1030 & 0.1785 & 0.2955 & 0.1545  \\
		& PET & 0.3350 & 0.1225 & 0.1395 & 0.5410 & 0.1640 & 0.1070 & 0.1920 & 0.3540 & 0.1835  \\
		& $LM_{adj}$ & 0.2910 & 0.1175 & 0.1155 & 0.4705 & 0.1400 & 0.0865& 0.1805 & 0.2870 & 0.1395 \\
		& CD & 0.0830 & 0.0705 & 0.0590 & 0.0915 & 0.0740 & 0.0560 & 0.0760 & 0.0715 & 0.0695 \\ \hline 
	\end{tabular}
\vskip2mm Plots for the Chi-square cases \\
\includegraphics[width=0.47\textwidth]{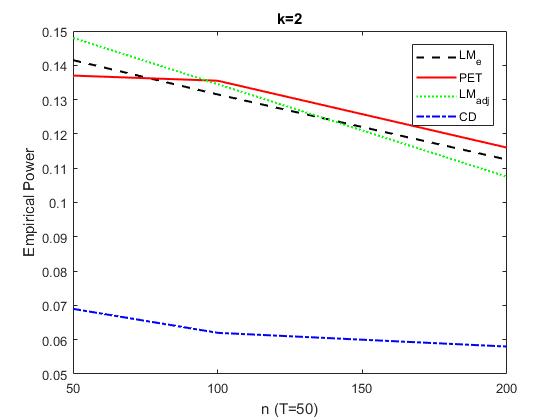}
\includegraphics[width=0.47\textwidth]{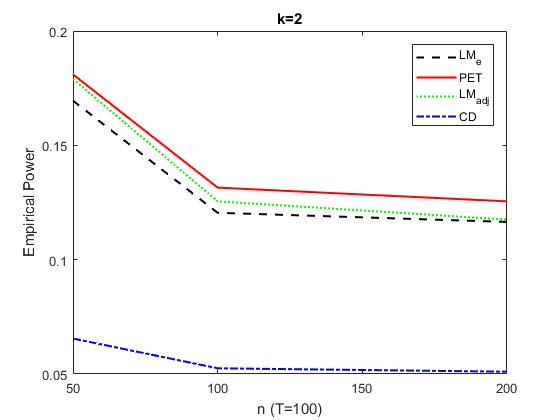}\\ 
\includegraphics[width=0.47\textwidth]{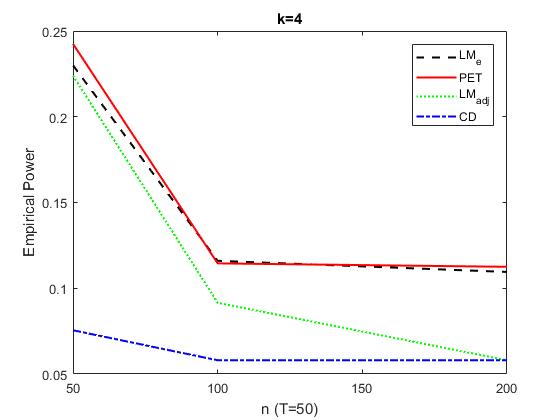}
\includegraphics[width=0.47\textwidth]{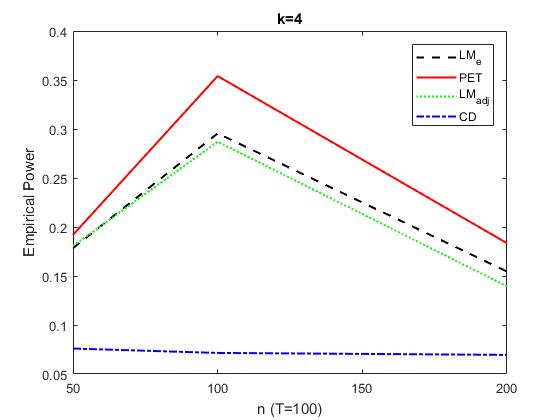}
\end{table}

\begin{table}[ht]
	\centering
	\caption{Empirical powers of tests under less-sparse case.}
	\label{table4}
	\begin{tabular}{|cc|ccc|ccc|ccc|}
		\hline 
		& & \multicolumn{3}{c|}{Normal} & \multicolumn{3}{c|}{Student-t} & \multicolumn{3}{c|}{Chi-square}\\ \hline
		T & n & 50 & 100 & 200 & 50 & 100 & 200 & 50 & 100 & 200 \\ \hline
		\multicolumn{11}{c}{$k=2$}\\ \hline 
		\multirow{4}{*}{50} & $LM_e$ & 0.9780 & 0.9870 & 0.9910 & 0.9805 & 0.9610 & 0.9750 & 0.9850 & 0.9985 & 0.9135  \\
		& PET & 0.9965 & 0.9980 & 0.9990 & 0.9960 & 0.9920 & 0.9935 & 0.9965 & 1 & 0.9745  \\
		& $LM_{adj}$ & 0.9800 & 0.9870 & 0.9910 & 0.9820 & 0.9610 & 0.9735 & 0.9865 & 0.9985 & 0.9100 \\
		& CD & 0.5135 & 0.5330 & 0.5535 & 0.5110 & 0.4815 & 0.4990 & 0.5190 & 0.6380 & 0.4120 \\ \hline
		\multirow{4}{*}{100} & $LM_e$ &  1 & 1 & 1 &  1 & 1 & 1 & 0.9965 & 1 & 1  \\
		& PET & 1 & 1 & 1 & 1 & 1 & 1 & 0.9995 & 1 & 1  \\
		& $LM_{adj}$ & 1 & 1 & 1 & 1 & 1 & 1 & 0.9975 & 1 & 1 \\
		& CD & 0.7750 & 0.7755 & 0.8240 & 0.8630 & 0.8665 & 0.7355 & 0.5685 & 0.7940 & 0.8120 \\ \hline 
		\multicolumn{11}{c}{$k=4$}\\ \hline 
		\multirow{4}{*}{50} & $LM_e$ &  0.9265 & 0.9530 & 0.9845 & 0.9665 & 0.9810 & 0.9860 & 0.9940 & 0.9855 & 0.9945  \\
		& PET & 0.9735 & 0.9885 & 0.9960 & 0.9890 & 0.9955 & 0.9980 & 0.9995 & 0.9980 & 0.9995  \\
		& $LM_{adj}$ & 0.9240 & 0.9445 & 0.9670 & 0.9640 & 0.9790 & 0.9675 & 0.9935 & 0.9810 & 0.9870 \\
		& CD & 0.4225 & 0.4725 & 0.5220 & 0.4810 & 0.5165 & 0.5420  & 0.5490 & 0.5020 & 0.5585 \\ \hline
		\multirow{4}{*}{100} & $LM_e$ & 1 & 1 & 1  & 1& 0.9995  & 1 & 1 & 1 & 1   \\
		& PET & 1 & 1 & 1 & 1 & 1 & 1 & 1 & 1 & 1  \\
		& $LM_{adj}$ &1 & 1 & 1 & 1 &0.9995 &  1 &  1 & 1 & 1 \\
		& CD & 0.6315 & 0.7305 & 0.7725 & 0.7045 & 0.6950 & 0.8245 & 0.6530 & 0.8515 & 0.7460 \\ \hline 
	\end{tabular}
\vskip2mm Plots for the Chi-square cases \\
\includegraphics[width=0.49\textwidth]{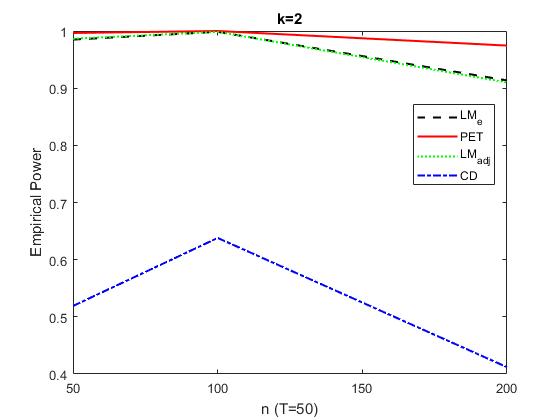}
\includegraphics[width=0.49\textwidth]{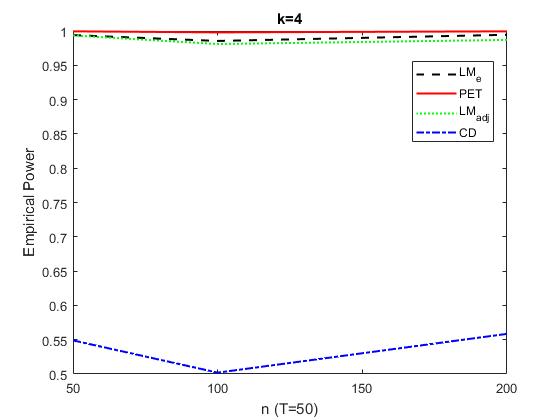}
\end{table}

\subsection{Additional simulation experiments}

The  two proposed tests $LM_e$ and $PET$ are established  for the
fixed effects panel data model \eqref{model}.  Additional  simulations
are conducted to show the finite sample performance of the proposed
tests for the heterogeneous panel data model \eqref{model_gene} although this model is not covered by the developed theory.
We consider the following 
data  generating process of \cite{pesaran2008bias}:
\[y_{it}=a +\sum_{l=2}^k x_{lit}\beta_{li}+\mu_i+\nu_{it}, \quad 1\leq i\leq n; ~1\leq t\leq T, \] 
with $\beta_{li}\stackrel{i.i.d.}{\sim} N(1,0.04)$. All other settings
are kept the same as for the fixed effect model. It is striking and satisfactory to observe that for the
  heterogeneous panel model,  
  conclusions from the simulation experiments are in general very
  similar to those reported in the previous section for the
  fixed-effect panel model. The following tables report some empirical
  results for the case of $k=2$ (the other results with $k=4$ are very
  similar and thus omitted).
Table~\ref{table5} presents the empirical sizes of all tests under
three different error distributions.
The empirical sizes of the considered  tests are all close to the
nominal level. The proposed  tests $LM_e$ and PET perform slightly
better than $LM_{adj}$. Tables~\ref{table6}, \ref{table7} and
\ref{table8} show the empirical powers of the tests for the  dense
case, the sparse case and the less-sparse case, respectively. As
expected, $LM_e$, PET and $LM_{adj}$ all show higher powers  when the
correlation matrix of errors becomes denser. Again  the proposed PET
has consistently the highest power.
In conclusion, the proposed two tests seem also valid  for the
heterogeneous panel data model (\ref{model_gene}) according to these experiments.

\begin{table}[ht]
	\centering
	\caption{Empirical sizes of tests (in \%) under heterogeneous panel data model \eqref{model_gene} with $k=2$.}
		\label{table5}
	\begin{tabular}{|cc|ccc|ccc|ccc|}
		\hline 
		& & \multicolumn{3}{c|}{Normal} & \multicolumn{3}{c|}{Student-t} & \multicolumn{3}{c|}{Chi-square}\\ \hline
		T & n & 50 & 100 & 200 & 50 & 100 & 200 & 50 & 100 & 200 \\ \hline
		\multirow{5}{*}{50} & $LM_e$ & 5.05 & 5.10 & 5.40 & 4.85 & 5.40 & 5.20 & 5.50 & 5.80 & 6.35 \\
		& PET & 4.45 & 4.95 & 5.35 & 5.30 & 5.40 & 5.35 & 5.50 & 5.50 & 5.50 \\
		& $LM_{adj}$ & 5.45 & 5.25 & 5.05 & 5.35 & 5.50 & 5.10 & 5.80 & 6.05 & 5.90 \\
		& CD & 5.25&5.00 &4.30 & 5.10 & 4.85 & 4.55 & 5.40 & 4.05 & 5.25 \\ \hline 
		\multirow{5}{*}{100} & $LM_e$ & 4.65 & 4.50 & 5.45 & 5.30 & 4.20 & 5.95 & 5.15 & 5.15 & 5.30 \\
		& PET & 4.45 & 4.85 & 5.55 & 5.45 & 4.90 & 6.25 & 5.70 & 5.90 & 5.00 \\
		& $LM_{adj}$ & 4.80 & 4.75 & 5.55 & 5.60 & 4.40 & 5.95 & 5.40 & 5.25 & 5.35 \\
		& CD & 4.25 & 5.20 & 4.85 & 4.25 & 5.10 & 4.60 & 4.45 & 4.85 & 4.75 \\
		\hline
	\end{tabular}
\end{table}

\begin{table}[ht]
	\centering
	\caption{Empirical powers of tests under dense case with $n=50, 100$, $T=100$, $k=2$ and varying $h$ for heterogeneous panel data model \eqref{model_gene}.}
	\label{table6}
	\def\arraystretch{0.7}
	\begin{tabular}{|c|r|ccccccc|}
		\hline 
		n & $h$ & 1 & 2 & 3 & 4 & 5 & 6 & 7 \\
		\hline 
		\multirow{15}{*}{50} & & \multicolumn{7}{c|}{Normal}\\ \cline{2-9}
		&	$LM_e$ & 0.2075 & 0.8490 & 0.9910 & 1 & 1 & 1 & 1  \\
		&	PET & 0.2615 & 0.9345 & 0.9995 & 1 & 1 & 1 & 1\\
		&	$LM_{adj}$ & 0.2160 & 0.8535 & 0.9910 & 1 & 1 & 1 & 1 \\
		&	CD & 0.0565 & 0.0505 & 0.0505 & 0.0490 & 0.0485 & 0.0505 & 0.0490 \\
		\cline{2-9}
		&	& \multicolumn{7}{c|}{Student-t}\\ \cline{2-9}
		&	$LM_e$ & 0.3725 & 0.8530 & 0.9915 & 1 & 1 & 1 & 1  \\
		&    PET & 0.4900 & 0.9385 & 0.9975 & 1 & 1 & 1 & 1\\
		&	$LM_{adj}$ & 0.3840 & 0.8575 & 0.9915 & 1 & 1 & 1 & 1 \\
		&	CD & 0.0645 & 0.0505 & 0.0495 & 0.0525 & 0.0530 & 0.0560 & 0.0650  \\
		\cline{2-9}
		&	& \multicolumn{7}{c|}{Chi-square}\\ \cline{2-9}
		&	$LM_e$ & 0.2140 & 0.8590 & 0.9945 & 0.9990 & 1 & 1 & 1  \\
		&	PET & 0.2650 & 0.9470 & 0.9985 & 1 & 1 & 1 & 1 \\
		&	$LM_{adj}$ & 0.2190 & 0.8625 & 0.9945 & 0.9990 & 1 & 1 & 1 \\
		&	CD & 0.0460 & 0.0560 & 0.0695 & 0.0785 & 0.0890 & 0.1010 & 0.1190  \\
		\hline 
		\multirow{15}{*}{100} & & \multicolumn{7}{c|}{Normal}\\ \cline{2-9}
		&	$LM_e$ & 0.1350 & 0.4795 & 0.8755 & 0.9865 & 0.9995 & 1 & 1  \\
		&	PET & 0.1615 & 0.6550 & 0.9665 & 0.9990 & 1 & 1 & 1 \\
		&	$LM_{adj}$ & 0.1390 & 0.4845 & 0.8785 & 0.9865 & 0.9995 & 1 & 1 \\
		&	CD & 0.0450 & 0.0435 & 0.0425& 0.0460 & 0.0470 & 0.0475 & 0.0515 \\
		\cline{2-9}
		&	& \multicolumn{7}{c|}{Student-t}\\ \cline{2-9}
		&	$LM_e$ & 0.1275 & 0.5210 & 0.9040 & 0.9900 & 0.9995 & 1 & 1  \\
		&    PET & 0.1435 & 0.6865 & 0.9705 & 0.9995 & 1 & 1 & 1 \\
		&	$LM_{adj}$ & 0.1295 & 0.5275 & 0.9065 & 0.9900 & 0.9995 & 1 & 1 \\
		&	CD & 0.0505 & 0.0420 & 0.0420 & 0.0430 & 0.0450 & 0.0475 & 0.0495 \\
		\cline{2-9}
		&	& \multicolumn{7}{c|}{Chi-square}\\ \cline{2-9}
		&	$LM_e$ & 0.1360 & 0.5660 & 0.9225 & 0.9950 & 1 & 1 & 1  \\
		&	PET & 0.1530 & 0.7310 & 0.9810 & 1 & 1 & 1 & 1 \\
		&	$LM_{adj}$ & 0.1400 & 0.5760 & 0.9245 & 0.9950 & 1 & 1 & 1 \\
		&	CD & 0.0810 & 0.0470 & 0.0495 & 0.0505 & 0.0515 & 0.0545 & 0.0545  \\
		\hline 
	\end{tabular}
\end{table}

\begin{table}[ht]
	\centering
	\caption{Empirical powers of tests under sparse case for heterogeneous panel data model \eqref{model_gene} with $k=2$.}
  \label{table7}
	\begin{tabular}{|cc|ccc|ccc|ccc|}
		\hline 
		& & \multicolumn{3}{c|}{Normal} & \multicolumn{3}{c|}{Student-t} & \multicolumn{3}{c|}{Chi-square}\\ \hline
		T & n & 50 & 100 & 200 & 50 & 100 & 200 & 50 & 100 & 200 \\ \hline
		\multirow{4}{*}{50} & $LM_e$ & 0.1405 & 0.1045 & 0.1020 & 0.1240 & 0.0715 & 0.1040 & 0.1870 & 0.1220 & 0.1290  \\
		& PET & 0.1445 & 0.1080 & 0.1035 & 0.1240 & 0.0700 & 0.1045 & 0.1925 & 0.1340 & 0.1320  \\
		& $LM_{adj}$ & 0.1480 & 0.1055 & 0.0975 & 0.1320 & 0.0720 & 0.0990 & 0.1945 & 0.1225 & 0.1200 \\
		& CD & 0.0720 & 0.0665 & 0.0645 & 0.0780 & 0.0660 & 0.0605 & 0.0735 & 0.0655 & 0.0570 \\ \hline
		\multirow{4}{*}{100} & $LM_e$ & 0.2985 & 0.1895 & 0.1840 & 0.1805 & 0.1675 & 0.1695 & 0.3555 & 0.1360 & 0.1185  \\
		& PET & 0.3205 & 0.2050 & 0.2035 & 0.1890 & 0.1855 & 0.1885 & 0.3875 & 0.1380 & 0.1250  \\
		& $LM_{adj}$ & 0.3080 & 0.1945 & 0.1850 & 0.1885 & 0.1700 & 0.1720 & 0.3625 & 0.1375 & 0.1190 \\
		& CD & 0.0800 & 0.0600 & 0.0590 & 0.0755 & 0.0545 & 0.0690 & 0.0730 & 0.0600 & 0.0545 \\ \hline 
	\end{tabular}
\end{table}

\begin{table}[ht]
	\centering
	\caption{Empirical powers of tests under less-sparse case for heterogeneous panel data model \eqref{model_gene} with $k=2$.}
		\label{table8}
	\def\arraystretch{0.7}
	\begin{tabular}{|cc|ccc|ccc|ccc|}
		\hline 
		& & \multicolumn{3}{c|}{Normal} & \multicolumn{3}{c|}{Student-t} & \multicolumn{3}{c|}{Chi-square}\\ \hline
		T & n & 50 & 100 & 200 & 50 & 100 & 200 & 50 & 100 & 200 \\ \hline
		\multirow{4}{*}{50} & $LM_e$ & 0.9030 & 0.9795 & 0.9925 & 0.9980 & 0.9620 & 0.9380 & 0.9995 & 0.9980 & 0.9830  \\
		& PET & 0.9630 & 0.9965 & 0.9985 & 1 & 0.9930 & 0.9850 & 1 & 1 & 0.9960  \\
		& $LM_{adj}$ & 0.9085 & 0.9800 & 0.9915 & 0.9980 & 0.9620 & 0.9345 & 0.9995 & 0.9980 & 0.9820 \\
		& CD & 0.4085 & 0.5100 & 0.5665 & 0.5940 & 0.4695 & 0.4475 & 0.5865 & 0.6085 & 0.5355 \\ \hline
		\multirow{4}{*}{100} & $LM_e$ & 0.9990 & 1 & 1 & 1 & 1 & 1 & 1 & 1 & 1  \\
		& PET & 1 & 1 & 1 & 1 & 1 & 1 & 1 & 1 & 1  \\
		& $LM_{adj}$ & 0.9990 & 1 & 1 & 1 & 1 &1  & 1 & 1 &  \\
		& CD & 0.6405 & 0.7685 & 0.8020 & 0.6955 & 0.8475 & 0.7805 & 0.6490 & 0.6830 & 0.7915 \\ \hline 
	\end{tabular}
\end{table}

\section{A real data analysis}\label{sec:realdata}

We  apply our  two new tests $LM_e$ and PET  to the public health data sets
in \cite{liworking2020} for the investigation of  association between  air
pollution,  hypertension and blood pressure of people.  They used a unique
sample of elderly people in Nanjing (China)  containing 21 individuals
with 441 medical records in total with precise examination dates, that
is, $n=21, T=21$. Fixed effects panel data models are constructed to
evaluate the effect of the special particulate $PM_{2.5}$ (diameter
$<2.5~ \mu$m) on the hypertension (HY), systolic blood pressure (SBP)
and diastolic blood pressure (DBP), respectively, by controlling the individual fixed
effects $\delta_i, i=1,\ldots, n$. Temperature (temp) is also considered as another
control variable due to its positive effect on blood
pressure. Therefore,  three panel models are considered
\begin{itemize}
	\item Model 1: $HY=\beta_1 \log (PM_{2.5})+\beta_2 \log(temp)+\delta_i+\nu_1$;
	\item Model 2: $SBP=\beta_1 \log (PM_{2.5})+\beta_2 \log(temp)+\delta_i+\nu_2$;
	\item Model 3: $DBP=\beta_1 \log (PM_{2.5})+\beta_2 \log(temp)+\delta_i+\nu_3$.
\end{itemize}

To investigate whether the cross-sectional uncorrelation assumption in
three models is justified, we applied the $LM_{adj}$
test, the $LM_e$ test, the PET test and the CD test to each regression
model. The values of the corresponding test statistics are reported in
 Table~\ref{table9}.

 \begin{table}[ht]
	\centering
	\caption{Values of test statistics.}
	\label{table9}
	\begin{tabular}{|c|ccc|}
		\hline
		Tests & Model 1 & Model 2 & Model 3\\
		\hline 
		PET & -0.2035 & 31.56 & 158.3 \\
		$LM_e$ & 0.0694 & 12.75 & 34.65 \\
		$LM_{adj}$ & 0.9143 & 15.66 & 40.94 \\
		CD & 0.2049 & 16.71 & 28.35\\
		\hline 
	\end{tabular}
 \end{table}
 
  Among the
 other four tests and considering a 5\% nominal level,  none of them  can 
 reject the null
 hypothesis of cross-sectional uncorrelation under Model 1.  But they
 all  reject
 the null under both Model 2 and Model 3. That means, individuals are
 uncorrelated in terms of hypertension, but correlated in terms of SBP
 and DBP. Meanwhile, the values of PET are much larger than others in
 Models 2 and  3 which confirm  its power-enhancement
 property.
 Overall, the statistical results drawn from Model 1 seem 
 reliable and consistent with each other,  while more control variables need to be investigated to
 study the association of air pollution with SBP or DBP.  

 As illustrated by this example, the results of this paper seem to have potential application in the important area of panel data modelling.

\section{Conclusion}\label{sec:concl}

For large fixed effects panel data model, we propose two new and
efficient tests to detect the existence of cross-sectional correlation
(dependence). The asymptotic normalities of test statistics are
constructed under a simultaneous limit scheme (SIM-L) where  the  cross-sectional units
dimension $n$ and  the time series dimension $T$ are both large with
comparable magnitude. 
Meanwhile, these results do not
need the normality  assumption on the errors and/or on the random
design, while such normality assumptions   are   essential for
  the theoretical justification of most of the existing tests for
large panels.  Extensive Monte-Carlo experiments demonstrates the
superiority of our proposed tests over some popular existing methods
in terms of size and power. Especially, the power enhanced
high-dimensional test PET consistently outperforms all the other
methods considered.

There are still several avenues for future research. The tests
proposed here are based on the fixed effects panel data model.
It is highly valuable to investigate their validity in  other large
panel  models. For example, our simulation experiments  have shown the
applicability of these proposed tests for the  heterogeneous panel
data model.  A thorough  theoretical investigation of such observation
is missing though.

\bibliographystyle{plainnat}
\bibliography{reference}

\begin{thebibliography}{16}
\providecommand{\natexlab}[1]{#1}
\providecommand{\url}[1]{\texttt{#1}}
\expandafter\ifx\csname urlstyle\endcsname\relax
  \providecommand{\doi}[1]{doi: #1}\else
  \providecommand{\doi}{doi: \begingroup \urlstyle{rm}\Url}\fi

\bibitem[Bailey et~al.(2020)Bailey, Jiang, and Yao]{jiangworking2020}
Natalia Bailey, Dandan Jiang, and Jianfeng Yao.
\newblock A {L}agrange-{M}ultiplier test for large heterogeneous panel data
  models.
\newblock \emph{manuscript}, 2020.

\bibitem[Baltagi et~al.(2011)Baltagi, Feng, and Kao]{baltagi2011testing}
Badi~H Baltagi, Qu~Feng, and Chihwa Kao.
\newblock Testing for sphericity in a fixed effects panel data model.
\newblock \emph{The Econometrics Journal}, 14\penalty0 (1):\penalty0 25--47,
  2011.

\bibitem[Baltagi et~al.(2012)Baltagi, Feng, and Kao]{baltagi2012lagrange}
Badi~H Baltagi, Qu~Feng, and Chihwa Kao.
\newblock A {L}agrange multiplier test for cross-sectional dependence in a
  fixed effects panel data model.
\newblock \emph{Journal of Econometrics}, 170\penalty0 (1):\penalty0 164--177,
  2012.

\bibitem[Baltagi et~al.(2017)Baltagi, Kao, and Wang]{baltagi2017asymptotic}
Badi~H Baltagi, Chihwa Kao, and Fa~Wang.
\newblock Asymptotic power of the sphericity test under weak and strong factors
  in a fixed effects panel data model.
\newblock \emph{Econometric Reviews}, 36\penalty0 (6-9):\penalty0 853--882,
  2017.

\bibitem[Breusch and Pagan(1980)]{breusch1980lagrange}
Trevor~S Breusch and Adrian~R Pagan.
\newblock The {L}agrange multiplier test and its applications to model
  specification in econometrics.
\newblock \emph{The Review of Economic Studies}, 47\penalty0 (1):\penalty0
  239--253, 1980.

\bibitem[Chudik and Pesaran(2013)]{chudik2013large}
Alexander Chudik and M~Hashem Pesaran.
\newblock Large panel data models with cross-sectional dependence: a survey.
\newblock \emph{CAFE Research Paper}, \penalty0 (13.15), 2013.

\bibitem[Chudik et~al.(2011)Chudik, Pesaran, and Tosetti]{chudik2011}
Alexander Chudik, M~Hashem Pesaran, and Elisa Tosetti.
\newblock Weak and strong cross-section dependence and estimation of large
  panels.
\newblock \emph{The Econometrics Journal}, 14\penalty0 (2):\penalty0 C45--C90,
  2011.

\bibitem[Lin et~al.(2020)Lin, Huang, Li, Liang, Tang, Hong, Yan, Huang, and
  Li]{liworking2020}
Yang-Chi-Dung Lin, Hsi-Yuan Huang, Jing Li, Donghai Liang, Yun Tang, Hsiao-Chin
  Hong, Qiting Yan, Hsien-Da Huang, and Zhaoyuan Li.
\newblock Air pollution and blood pressure in the elderly.
\newblock \emph{manuscript}, 2020.

\bibitem[Pesaran(2004)]{pesaran2004general}
M~Hashem Pesaran.
\newblock General diagnostic tests for cross section dependence in panels.
\newblock Technical report, CESifo working paper series, 2004.

\bibitem[Pesaran(2015)]{pesaran2015testing}
M~Hashem Pesaran.
\newblock Testing weak cross-sectional dependence in large panels.
\newblock \emph{Econometric Reviews}, 34\penalty0 (6-10):\penalty0 1089--1117,
  2015.

\bibitem[Pesaran et~al.(2008)Pesaran, Ullah, and Yamagata]{pesaran2008bias}
M~Hashem Pesaran, Aman Ullah, and Takashi Yamagata.
\newblock A bias-adjusted {LM} test of error cross-section independence.
\newblock \emph{The Econometrics Journal}, 11\penalty0 (1):\penalty0 105--127,
  2008.

\bibitem[Petrov(1975)]{petrov1975independent}
VV~Petrov.
\newblock \emph{Sums of Independent Random Variables}.
\newblock Springer, 1975.

\bibitem[Schott(2005)]{schott2005testing}
James~R Schott.
\newblock Testing for complete independence in high dimensions.
\newblock \emph{Biometrika}, 92\penalty0 (4):\penalty0 951--956, 2005.

\bibitem[Van~der Vaart(1998)]{van1998asymptotic}
A.~W. Van~der Vaart.
\newblock \emph{Asymptotic Statistics}.
\newblock Cambridge University Press, New York, London, 1998.

\bibitem[Yao et~al.(2015)Yao, Zheng, and Bai]{YZB15-book}
J.~Yao, S.~Zheng, and Z.~Bai.
\newblock \emph{Large Sample Covariance Matrices and High-dimensional Data
  Analysis}.
\newblock Cambridge University Press, New York, London, 2015.
\newblock \doi{10.1017/CBO9781107588080}.

\bibitem[Yin et~al.(2020)Yin, Li, Tian, and Zheng]{zheng2020}
Yanqing Yin, Changcheng Li, Guoliang Tian, and Shurong Zheng.
\newblock Spectral properties of high-dimensional sample correlation matrices.
\newblock \emph{manuscript}, 2020.

\end{thebibliography}

\appendix
\section{Proof of the asymptotic normality~\eqref{trace_error}}
\label{app:trace1}

From the Theorem 3.1 of \cite{zheng2020}, we have
\[tr(R_T^2) - \mu_c \stackrel{D}{\longrightarrow} N(\mu_{limit}, \sigma_{limit}^2). \]
From the result for $g_l=x^2$ in Example 3.2 of \cite{zheng2020}, we get the centring term
\[\mu_c = n\left(1+\frac{n}{T-1}\right) = n \left(1+\frac{n}{T}\left(1+\frac{1}{T-1} \right) \right)=n(1+c_T)+c_T^2. \]
From the results for $R=\mathbf{I}_n$ and $g_l=x^2$ in Example 3.3 of \cite{zheng2020}, we get the limiting terms
\[\mu_{limit}=-c, \ \textrm{and} \ \sigma_{limit}^2=4c^2. \]
The proof of Lemma~\ref{trace_error} is complete.
\section{Proof of the asymptotic normality~\eqref{PET_error}}
\label{sec:PET}
From the Theorem 3.2 of \cite{zheng2020}, we have
\[tr(R_T^4)-\mu_c^\prime \stackrel{D}{\longrightarrow} N(\mu_l^\prime, \sigma_l^{\prime2}). \]
From the result for $g_l=x^4$ in Example 3.2 of \cite{zheng2020}, we get the centring term
\[\mu_c^\prime = n\left(1+\frac{6n}{T-1}+\frac{6n^2}{(T-1)^2}+\frac{n^3}{(T-1)^3} \right). \]
From the results for $R=\mathbb{I}_n$ and $g_l=x^4$ in Example 3.3 of \cite{zheng2020}, we get the limiting terms
\[\mu_l^\prime = -6c(1+c)^2,  \]
and
\[\sigma_l^{\prime2} = 8c^4+96c^3(1+c)^2+16c^2(3c^2+8c+3)^2. \]
The proof of Lemma~\ref{PET_error} is complete.

\section{Proof of the key estimates \eqref{trace_residual} and \eqref{trace4_diff}}
\label{sec:trace4}

The proof for the two key estimates is 
the main technical difficulty of the paper.
They are given in Propositions~\ref{trace2_prop} and \ref{PET_prop}
at the end of this section after a series of preliminary lemmas and
calculations.

Recall first some useful notations related to the OLS residuals as
given at the beginning of Section~\ref{sec:OLS}.
The OLS estimator for $\beta$ is
$\hab=\left(XX^{\prime}\right)^{-1}XY$ ($k\times1$ vector).
For the errors  $\vv_{it}$, let $\VV_i=\begin{pmatrix} \vv_{i1},
\cdots, \vv_{iT} \end{pmatrix}^{\prime}$ be a ${T\times1}$ vector and
then $\VV=\begin{pmatrix} \VV_{1} , \cdots,  \VV_{n} \end{pmatrix}$ is
a $T\times n$ matrix. Similarly for the residuals,
define
$\vvv_{it}=\vv_{it}-{\tilde x_{it}}^{\prime}\left(\hab-\beta\right)$,
$\VVV_i=\VV_i-X_i^{\prime}\left(\hab-\beta\right)$ and  set
$\VVV=\begin{pmatrix} \VVV_{1} , \cdots,  \VVV_{n} \end{pmatrix}$
($T\times n$ matrix).
Define also $W_i=X_i^{\prime}\left(\hab-\beta\right)$ ($T\times1$ vector), $W=\begin{pmatrix} W_{1} , \cdots,  W_{n} \end{pmatrix}$ ($T\times n$ matrix). We can easily get that $\VVV_i=\VV_i-W_i$, $\VVV=\VV-W$.

We have for the sample covariance matrices:
$S_T=\frac1T\VV^{\prime}\VV$, $\hat{S}_T=\frac1T\VVV^{\prime}\VVV$
with respective elements,
$$\begin{aligned}
S_{T,i,j}&=\frac1T \inn{\VV_i}{\VV_j} =\frac1T\sum\limits_{t=1}^{T}\vv_{it}\vv_{jt},\\
\hat{S}_{T,i,j}&=\frac1T \inn{\VVV_i}{\VVV_j}=\frac1T\sum\limits_{t=1}^{T}\vvv_{it}\vvv_{jt}.
\end{aligned}
$$

\newcommand\slognn{\left(\{ \log n/n\}^{\frac12} \right)}
\newcommand\slognt{\left(\sqrt{\frac{\log n}{T}}\right)}
\newcommand\pa[1]{\left(#1\right)}

\begin{lemma}~\citep[Theorem 13 of Chapter 13]{petrov1975independent}
  \label{lem:petrov}
  Let $Y_1, \cdots, Y_n$ be independent and identically distributed random variables, such that $E(Y_1)=0$, $E(Y_1)^2=1$ and $E|Y_1|^r<\infty$ for some $r\geq3$. Then
	$$\left|\mathbb{P}(\frac{1}{\sqrt{n}}\sum_{i=1}^nY_i<y)-\Phi(y)\right|\leq \frac{C(r)}{(1+|y|)^r}\left[\frac{E|Y_1|^3}{\sqrt{n}}+ \frac{E|Y_1|^r}{n^{\frac{r-2}{2}}}\right]$$ for all $y$, where $\Phi(\cdot)$ is the cumulative distribution function of standard normal  and $C(r)$ is a positive constant depending only on $r$.
\end{lemma}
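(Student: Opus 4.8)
The statement is the non-uniform Berry--Esseen bound of \cite{petrov1975independent}, so within this paper it is invoked as a black box; what follows sketches the route one would take to prove it from scratch. Write $S_n=\sum_{i=1}^n Y_i$ and let $F_n$ denote the distribution function of $S_n/\sqrt{n}$. The factor $(1+|y|)^{-r}$ is what separates this from the classical uniform Berry--Esseen theorem, and producing that polynomial decay in the tails is the main obstacle. The plan is to treat the bulk and the tails by different arguments.

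First, for bounded $y$ (say $|y|\le 1$) the weight $(1+|y|)^{-r}$ is bounded away from zero, so it is enough to recover the classical uniform estimate. I would apply Esseen's smoothing inequality to reduce $\sup_y|F_n(y)-\Phi(y)|$ to an integral of $|f_n(t)-e^{-t^2/2}|/|t|$ over a range $|t|\le c\sqrt{n}/E|Y_1|^3$, where $f_n$ is the characteristic function of $S_n/\sqrt{n}$. A third-order Taylor expansion of the logarithm of the characteristic function of $Y_1$, controlled by $E|Y_1|^3$ under the normalisations $EY_1=0$ and $EY_1^2=1$, then delivers the uniform term $C\,E|Y_1|^3/\sqrt{n}$.

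The substantive work is the tail regime $|y|$ large; by symmetry take $y>0$, where I would compare $1-F_n(y)=\mathbb{P}(S_n\ge y\sqrt{n})$ with $1-\Phi(y)$, the latter decaying faster than any power of $y$. I would split $\mathbb{P}(S_n\ge y\sqrt{n})$ according to the largest summand. On the event that some $|Y_i|\ge\tfrac12 y\sqrt{n}$, a union bound together with Markov's inequality for the $r$-th moment gives a contribution at most $n\,\mathbb{P}(|Y_1|\ge\tfrac12 y\sqrt{n})\le C_r\,E|Y_1|^r/(y^r n^{(r-2)/2})$, which is exactly the second term of the asserted bound, already carrying its $y^{-r}$ decay. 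On the complementary event, where every summand lies below $\tfrac12 y\sqrt{n}$, I would pass to the conjugate (exponentially tilted) distribution whose mean is shifted to the target level $y\sqrt{n}$: the large-deviation factor of the tilt supplies Gaussian-type decay comparable to $1-\Phi(y)$, while a central Berry--Esseen estimate for the tilted sum supplies the correction of order $E|Y_1|^3/\sqrt{n}$. Choosing the truncation threshold proportional to $(1+|y|)\sqrt{n}$ and tracking the resulting moment bounds through the tilt is what puts the polynomial weight $(1+|y|)^{-r}$ on this term as well.

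The main difficulty is precisely this last step: obtaining the sharp polynomial weight, rather than a merely uniform or exponential tail estimate, forces one to combine truncation (to capture the single-big-jump contribution through $E|Y_1|^r$) with exponential tilting (to control the bulk in the large-deviation range), and then to bookkeep the constants so that the two stated terms separate cleanly. For the purposes of the present appendix I would simply cite \cite{petrov1975independent}.
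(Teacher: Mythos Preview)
Your reading is correct: the paper does not prove this lemma at all but simply imports it verbatim from \cite{petrov1975independent}, and you likewise conclude by citing that reference. The proof sketch you supply (uniform Berry--Esseen for the bulk via Esseen smoothing, and the truncation/tilting dichotomy for the tails to extract the $(1+|y|)^{-r}$ weight) is a reasonable outline of the classical argument, but it goes well beyond what the paper itself does and is not needed here.
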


\begin{lemma}\label{lem:max_order}
	Let $\{Y_{ij}\}_{i\geq1,j\geq 1}$ be an array of independent and  identically distributed random variables such that $E(Y_{11})=0$, $E(Y_{11})^2=1$ and $E|Y_{11}|^r<\infty$ for some $r\geq3$. Let $X_{in}=\frac{1}{\sqrt{n}}\sum_{j=1}^nY_{ij}$, then for any  $\epsilon >0$, we have 
	$$\max_{1\le i \le n}|X_{in}|=O_p(n^{\frac{1}{2r}+\epsilon}).$$
\end{lemma}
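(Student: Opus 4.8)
The plan is to combine a crude union bound over the $n$ statistics $X_{1n},\dots,X_{nn}$ with the uniform, polynomially-decaying tail estimate provided by Lemma~\ref{lem:petrov}. Fix $\epsilon>0$ and set the threshold $y_n=M\,n^{\frac{1}{2r}+\epsilon}$, where $M>0$ is a constant to be taken large at the end. Since for each $i$ the row $(Y_{ij})_{j\ge1}$ is identically distributed, a union bound reduces everything to a single tail probability:
\[
\mathbb{P}\Bigl(\max_{1\le i\le n}|X_{in}|>y_n\Bigr)\le \sum_{i=1}^n \mathbb{P}(|X_{in}|>y_n)=n\,\mathbb{P}(|X_{1n}|>y_n).
\]
So the task becomes to bound $\mathbb{P}(|X_{1n}|>y_n)$ and to show that it remains negligible after multiplication by the factor $n$ coming from the maximum.

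Next I would apply Lemma~\ref{lem:petrov} to $Y_{11},\dots,Y_{1n}$, which satisfy its hypotheses with the given $r\ge3$, at the two points $y=y_n$ and $y=-y_n$. Using $\Phi(-y_n)=1-\Phi(y_n)$ together with $(1+|\pm y_n|)=(1+y_n)$, the two-sided tail is controlled by
\[
\mathbb{P}(|X_{1n}|>y_n)\le 2\bigl(1-\Phi(y_n)\bigr)+\frac{2C(r)}{(1+y_n)^r}\left[\frac{E|Y_{11}|^3}{\sqrt n}+\frac{E|Y_{11}|^r}{n^{(r-2)/2}}\right].
\]
Multiplying by $n$ produces three contributions. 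The Gaussian term $2n\bigl(1-\Phi(y_n)\bigr)$ is negligible because $y_n^2=M^2 n^{1/r+2\epsilon}\to\infty$, so it decays faster than any power of $n$. For the two Berry--Esseen terms I would use $(1+y_n)^r\ge y_n^r=M^r n^{1/2+r\epsilon}$, which turns the third-moment contribution into a quantity $\lesssim M^{-r}n^{-r\epsilon}$ and the $r$-th moment contribution into $\lesssim M^{-r}n^{(3-r)/2-r\epsilon}$.

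To conclude, I would observe that for every $r\ge3$ both exponents are strictly negative: the first equals $-r\epsilon$, while the second equals $(3-r)/2-r\epsilon\le -r\epsilon<0$. Hence each contribution tends to $0$ as $n\to\infty$ for any fixed $M$, which already gives $\limsup_n\mathbb{P}\bigl(\max_{i}|X_{in}|>y_n\bigr)=0$; enlarging $M$ to absorb the finitely many small-$n$ values then yields the tightness statement $\max_{1\le i\le n}|X_{in}|=O_p(n^{1/(2r)+\epsilon})$. The only delicate point, and the place where the rate is decided, is the calibration of the exponent $\tfrac{1}{2r}$: the union bound contributes the factor $n$, and the \emph{sharper} of the two Berry--Esseen terms is the one carrying $E|Y_{11}|^3/\sqrt n$, which forces $y_n^r\gg n^{1/2}$, that is $y_n\gg n^{1/(2r)}$. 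This is exactly why the extra $\epsilon$ cannot be removed, and why the $r$-th moment term (harmless for $r\ge3$) plays no role in fixing the rate. Thus the binding constraint, and hence the whole argument, hinges on the third-moment part of the Berry--Esseen bound.
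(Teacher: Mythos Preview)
Your proof is correct and follows essentially the same route as the paper: both arguments apply Lemma~\ref{lem:petrov} to control the tail $\mathbb{P}(|X_{1n}|>y_n)$ and then aggregate over $i$, with the paper using the i.i.d.\ product formula $1-[\mathbb{P}(|X_{1n}|<y_n)]^n$ and the approximation $(1-p)^n\sim 1-np$, which is asymptotically the same as your union bound. Your treatment is in fact slightly cleaner in that you handle the two-sided tail explicitly and identify transparently that the third-moment Berry--Esseen term is the binding constraint forcing the exponent $\tfrac{1}{2r}$.
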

\begin{proof}
	For some $\alpha>0,$ let $c=\frac{1}{2r}+\epsilon$, we have 
	\begin{align*}
	\mathbb{P}(\max_{1\le i \le n}|X_{in}|\geq \alpha n^c) & = 1-\left[ \mathbb{P}(|X_{1n}|<\alpha n^c) \right]^n\\
	& \le 1- \{\Phi( \alpha n^c) -\frac{C(r)}{(1+ \alpha n^c)^r} \left[\frac{E|Y_{11}|^3}{\sqrt{n}}+ \frac{E|Y_{11}|^r}{n^{\frac{r-2}{2}}}\right] \}^n \\
	&\sim 1- \{1-\frac{1}{\sqrt{2\pi}\alpha}n^{-c}e^{-\frac{\alpha^2n^{2c}}{2}}
	-\frac{C(r)}{(1+ \alpha n^c)^r} \left[\frac{E|Y_{11}|^3}{\sqrt{n}}+ \frac{E|Y_{11}|^r}{n^{\frac{r-2}{2}}}\right] \}^n \\
	&\sim \frac{1}{\sqrt{2\pi}\alpha}n^{1-c}e^{-\frac{\alpha^2n^{2c}}{2}}+ \frac{C(r)E|Y_{11}|^3\sqrt{n}}{(1+ \alpha n^c)^r}+ \frac{C(r)E|Y_{11}|^r}{(1+ \alpha n^c)^rn^{\frac{r}{2}-2}},
	\end{align*}
	where the first inequality follows by Lemma \ref{lem:petrov}, and the first approximation follows by the fact that $\Phi(x) \sim 1- \frac{1}{\sqrt{2\pi}x}e^{-\frac{x^2}{2}}$ for large $x$. Therefore, $\max_{1\le i \le n}|X_{in}|=O_p(n^c)$ holds.
\end{proof}

\begin{remark}
	For the panel data model, if the errors
    $\{\tilde{\nu}_{it}\}_{i\geq1,t\geq 1}$
    satisfy the conditions in Lemma~\ref{lem:max_order} and
    $X_{iT}=\frac{1}{\sqrt{T}}\sum_{t=1}^T\tilde{\nu}_{it}$, then for any
    $\epsilon >0$, the estimate
    $\max_{1\le i \le n}|X_{iT}|=O_p(n^{\frac{1}{2r}+\epsilon})$ still
    holds once we further assume that $K_1\leq\frac{n}{T}\leq K_2$ for
    some positive constants $K_1$ and $K_2$. The latter holds in the
    SIM-L scheme where indeed  $\frac{n}{T}\rightarrow c>0$.
\end{remark}

\begin{lemma}\label{lem:estimates}
	Under  
	Assumptions (B1) and (B2),
    for any $\epsilon>0$ and some integer $r_1\geq 6$ (such that
    $E\nu_{it}^6<\infty$), we have the following results,
	\begin{enumerate}
		\item[(a)\ ]  $\displaystyle \max_{1\le i ,  j \le n}    |\inn{\VV_i}{W_j}|=O_p (n^{\frac{1}{2r_1}+\epsilon-1/2})$.
		\item[(b)\ ]  $\displaystyle \max_{1\le i, j \le n}    |\inn{W_i}{W_j}|=O_p (n^{-1})$.
		
		\item[(c)\ ]  $\displaystyle \max_{1\le i  \le n} |S_{T,i,i} -\sigma^2|=O_p (n^{\frac{1}{r_1}+\epsilon-1/2})$.
		\item[(d)\ ]
		$\displaystyle   \max_{1\le i  \le n} |   \hat S_{T,i,i} -    S_{T,i,i} |   =O_p \pa{   n^{\frac{1}{2r_1}+\epsilon-3/2} }.$
		\item[(e)\ ]  $\displaystyle \max_{1\le i  \le n} |\hat S_{T,i,i} -\sigma^2|= O_p \left(n^{\frac{1}{2r_1}+\epsilon-3/2}\right)$.
		\item[(f)\ ]  $\displaystyle   \max_{1\le i  \le n} |   \hat S_{T,i,i}^2 -    S_{T,i,i}^2 |   =O_p \left(n^{\frac{1}{2r_1}+\epsilon-3/2} \right).$
		
	\end{enumerate}
\end{lemma}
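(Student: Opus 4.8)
The plan is to derive all six estimates from two basic ingredients: a rate for the OLS estimation error $\delta:=\hab-\beta$, and uniform-over-units control of sums that are either \emph{linear} or \emph{quadratic} in the errors, both obtained from the maximal inequality of Lemma~\ref{lem:max_order} together with the Berry--Esseen bound of Lemma~\ref{lem:petrov}. First I would pin down the rate of $\delta$. Writing $\delta=(XX')^{-1}\sum_{i=1}^n X_i\VV_i$, Assumption~(B2)(ii) bounds the pertinent eigenvalues of $XX'=\sum_i X_iX_i'$ (on the subspace $E_k$ on which the estimation error effectively lives) between $na_1T$ and $na_2T$, so that $\|(XX')^{-1}\|$ acting on the relevant vectors is $O((nT)^{-1})$; meanwhile, conditionally on the design, $\sum_i X_i\VV_i$ has mean zero by the independence in (B2)(i) and conditional covariance of order $nT$, hence $\|\sum_i X_i\VV_i\|=O_p(\sqrt{nT})$. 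Combining these gives $\|\delta\|=O_p((nT)^{-1/2})=O_p(n^{-1})$ under the SIM-L scheme.

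For (a) and (b) I would exploit $W_j=X_j'\delta$. For (a), $\inn{\VV_i}{W_j}=\delta'(X_j\VV_i)$, so $|\inn{\VV_i}{W_j}|\le\|\delta\|\,\|X_j\VV_i\|$. Each coordinate of $X_j\VV_i=\sum_t \tilde x_{jt}\vv_{it}$ is a sum of $T$ independent mean-zero terms, and Assumption~(B2)(iii) is exactly the Lindeberg/negligibility condition needed to apply Lemma~\ref{lem:petrov} to these weighted sums, after which the union-bound argument of Lemma~\ref{lem:max_order} controls $\max_{i,j}\|X_j\VV_i\|$. Since $X_j\VV_i$ is \emph{linear} in the errors it retains the full moment order $r_1$ from (B1), so the maximal inequality contributes a factor $n^{1/(2r_1)+\epsilon}$; multiplying by $\|\delta\|$ yields the stated rate. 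For (b), $\inn{W_i}{W_j}=(X_i'\delta)'(X_j'\delta)$, whence $|\inn{W_i}{W_j}|\le\|X_i'\delta\|\,\|X_j'\delta\|$, and $\|X_j'\delta\|^2=\delta'X_jX_j'\delta\le a_2T\|\delta\|^2$ by (B2)(ii); this gives $|\inn{W_i}{W_j}|=O(T)\cdot O_p((nT)^{-1})=O_p(n^{-1})$.

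For (c) I would write $S_{T,i,i}=\frac1T\sum_t\vv_{it}^2=\frac1T\sum_t\nu_{it}^2-\bar\nu_{i\cdot}^2$, with $\bar\nu_{i\cdot}=\frac1T\sum_t\nu_{it}$, and apply the maximal inequality to the centred \emph{squares} $\nu_{it}^2-\sigma^2$. Because squaring halves the available moments (only $r_1/2\ge3$ remain under (B1)), Lemma~\ref{lem:max_order} with parameter $r_1/2$ produces $\max_i|\frac1T\sum_t(\nu_{it}^2-\sigma^2)|=O_p(n^{1/r_1+\epsilon-1/2})$, while the centring correction $\max_i\bar\nu_{i\cdot}^2$ is of strictly smaller order; this is the source of the exponent $1/r_1$ (rather than $1/(2r_1)$) in (c).

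The remaining estimates are algebraic consequences. Expanding $\hat S_{T,i,i}-S_{T,i,i}=\frac1T\{-2\inn{\VV_i}{W_i}+\inn{W_i}{W_i}\}$ and inserting (a) and (b) gives (d); the triangle inequality applied to (c) and (d) gives (e); and factoring $\hat S_{T,i,i}^2-S_{T,i,i}^2=(\hat S_{T,i,i}-S_{T,i,i})(\hat S_{T,i,i}+S_{T,i,i})$, with $\hat S_{T,i,i}+S_{T,i,i}=O_p(1)$ uniformly by (c) and (e), gives (f). The hard part will be (a): it is the only estimate that simultaneously involves the global OLS error $\delta$ (which couples all $n$ units) and a maximal inequality over many unit pairs, and it is there that Assumption~(B2)(iii) must be invoked to secure the normal approximation underlying the maximal bound. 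Once (a) and the rate of $\delta$ are established, the other estimates follow by routine bookkeeping.
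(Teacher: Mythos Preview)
Your plan is correct and follows essentially the same route as the paper. The paper obtains $\|\hab-\beta\|=O_p((nT)^{-1/2})$, introduces the normalized quantity $\xi_i=(X_iX_i')^{-1/2}\sum_t\tilde x_{it}\vv_{it}$ (your $X_j\VV_i$ after standardization) and applies Lemma~\ref{lem:max_order} to bound its maximum, then combines these exactly as you describe for (a) and (b); (c), (d), (e) are handled identically. The one place you are slightly more efficient is (f): the paper fully expands $\hat S_{T,i,i}^2-S_{T,i,i}^2$ into five cross-terms and bounds each separately, whereas your factorization $(\hat S_{T,i,i}-S_{T,i,i})(\hat S_{T,i,i}+S_{T,i,i})$ together with (d) and the uniform $O_p(1)$ bound on the second factor reaches the same conclusion in one line.
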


\begin{proof}
	Throughtout the proof, the linear operators $X_iX_i'$ is
    restricted to the subspace $E_k$ (orthogonal to the constant
    vectors in $R^k$) where it is invertible by Assumption (B2). Note that $\epsilon_i$ for $i=1,2$ are small positive numbers which may vary in different equations.

	(a). Firstly, we consider the case $i=j$. By CLT, we have $\sqrt
    T\left(\frac1T\VV_i^\prime\VV_i-1\right)\cvd N(0,\tau^2)$, where
    $\tau^2=var(\vv_{it}^2)$. By Assumption (B2), we have for any
    non null $u\in E_k$, 
	$$
	\frac{\inn{u}{\sum\limits_{t}{\tilde x_{it}}\vv_{it}}}{  \left( u' X_i X_i' u\right)^{\frac12}}\cvd N\left(0,1\right).
	$$ 
	That is 
	$$\xi_i = \left( X_i X_i' \right)^{-\frac12} \sum\limits_{t}{\tilde x_{it}}\vv_{it}\cvd N_k\left(0,I_k\right).
	$$
	Therefore,  for some $r_1 \geq 3$ and for any $\epsilon_1>0$, from Lemma \ref{lem:max_order} we have
	\[ \max_{1\le i\le n} \|\xi_i\| =O_p(n^{\frac{1}{2r_1}+\epsilon_1}).
	\]
	Then
	\begin{align*}
	\max_{1\le i \le n} \left|\inn{W_i}{\VV_i}\right|=\max_{1\le i  \le n}\left|\sum\limits_t \vv_{it}{\tilde x_{it}}^\prime\left(\hab-\beta\right)\right|
	&= \max_{1\le i  \le n}\left|\sqrt T\xi_i^\prime\left(\frac1TX_iX_i^\prime\right)^{\frac12}\left(\hab-\beta\right)\right|\\
	&\leq \sqrt T\left|\left|\left(\frac1TX_iX_i^\prime\right)^{\frac12}\right|\right|\cdot\max_{1\le i  \le n}\left\|\xi_i\right\|\cdot\left\|\hab-\beta\right\|\\
	&=O_p\left(\sqrt T\cdot\frac{n^{\frac{1}{2r_1}+\epsilon_1}}{\sqrt {nT}}\right)=O_p(n^{\frac{1}{2r_1}+\epsilon_1-1/2}).
	\end{align*}
	The calculations for $i\ne j$ case is similar.
	
	\medskip
	
	(b). By Assumption (B2), we have
	\begin{align*}
	\left| \inn{W_i}{W_j} \right| &\le   \| W_i\| \| W_j\|  = \| X_i'    (\hab-\beta)\| \| X_j'    (\hab-\beta)\|
	 \le  \| X_i\| \|X_j\| \|\hab-\beta\|^2 \\
	& \le  \| X_i'X_i\|^{1/2} \|X_j'X_j\|^{1/2} \|\hab-\beta\|^2 \\
	& =  O(T^{1/2})  O(T^{1/2}) O_p(1/(nT))  = O_p(n^{-1}).
	\end{align*}
	
	\bigskip
	
	(c). Note for $r_2=r_1/2$, we have    $E|\tilde{\nu}_{it}^2-\sigma^2|^{r_2}<\infty$.
    By CLT, we have $Z_i:= \sqrt T (S_{T,i,i}-\sigma^2)\cvd
	N(0,\tau^2)$ where $\sigma^2=E(\vv_{it}^2)$ and $\tau^2=\text{var}
	(\vv_{it}^2)$, then we have
	$ \max\limits_{1\le i\le n} |Z_i| =    O_p(n^{\frac{1}{2r_2}+\epsilon_2-1/2})  $
    for any $\epsilon_2>0$ by Lemma \ref{lem:max_order}.
    Therefore,
	\[  \max\limits_{1\le i\le n} |S_{T,i,i}-\sigma^2| = O_p(n^{\frac{1}{2r_2}+\epsilon_2-1/2})=O_p(n^{\frac{1}{r_1}+\epsilon_2-1/2}).
	\]
	
	\medskip
	
	(d). By (a) and (b), 
	we have 
	\begin{align*}
	\nonumber
	\max_{1\le i  \le n} | \hat S_{T,i,i} -  S_{T,i,i} | & =\max_{1\le i \le n} \left|  -2T^{-1} \inn{W_i}{\VV_i} +T^{-1}\|W_i\|^2     \right|\\
	& = \frac1T  O_p (n^{\frac{1}{2r_1}+\epsilon_1-1/2}) +  \frac1T  O_p (n^{-1}) = O_p
	(  T^{-1}n^{\frac{1}{2r_1}+\epsilon_1-1/2}  ) . 
	\end{align*}
	
	\medskip
    
	(e).  The conclusion holds from (c) and (d). 
	
	\medskip
	
	(f). By (a), (b) and (c), we have
	\begin{eqnarray*}
      \lefteqn{	\max_{1\le i  \le n}| \hat S_{T,i,i}^2 -  S_{T,i,i}^2        |} \\
      & =& \max_{1\le i \le n} \frac{1}{T^2}\left|
	  4\inn{W_i}{\VV_i}^2 + \|W_i\|^4-4\inn{\VV_i}{\VV_i} \inn{W_i}{\VV_i} +2\inn{\VV_i}{\VV_i}\|W_i\|^2 +4 \inn{W_i}{\VV_i} \|W_i\|^2 \right|\\
	  &=&  \frac{1}{T^2}\left(O_p\left(n^{\frac{1}{r_1}+2\epsilon_1-1} \right) +O_p\left(\frac{1}{n^2} \right)+O_p\left( n^{\frac{1}{2r_1}+\epsilon_1+1/2}\right) + O_p\left(1 \right) +  O_p\left(n^{\frac{1}{2r_1}+\epsilon_1-3/2} \right) \right)\\
	  &=&O_p \left( T^{-2}n^{\frac{1}{2r_1}+\epsilon_1+1/2} \right).
    \end{eqnarray*}	
\end{proof}

\begin{lemma}\label{lem:trace2}
	Suppose Assumptions (B1)-(B2) hold.  We have
	\begin{enumerate}
		\item[(a)\ ]  $ \sum\limits_{i\not=j}^n\left\{\hat{S}_{T,i,j}^2-S_{T,i,j}^2\right\}\cvp0$.
		\item[(b)\ ]  $ \sum\limits_{i\not=j\not=l}^n\left\{\hat{S}_{T,i,j}^2\hat{S}_{T,j,l}^2-S_{T,i,j}^2S_{T,j,l}^2\right\}\cvp0$.
		\item[(c)\ ] $ \sum\limits_{i\not=j\not=l\not=s}^n\left\{\hat{S}_{T,i,j}\hat{S}_{T,j,l}\hat{S}_{T,i,s}\hat{S}_{T,l,s}-S_{T,i,j}S_{T,j,l}S_{T,i,s}S_{T,l,s}\right\}\cvp0$.
	\end{enumerate}
\end{lemma}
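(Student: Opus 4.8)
The plan is to control, in each of the three parts, the gap between a polynomial in the residual covariances $\hat{S}_{T,i,j}$ and the same polynomial in the error covariances $S_{T,i,j}$ by isolating the first-order discrepancy. Write $\Delta_{ij}:=\hat{S}_{T,i,j}-S_{T,i,j}$. Since $\VVV_i=\VV_i-W_i$ with $W_i=X_i'(\hab-\beta)$ and $b:=\hab-\beta=(XX')^{-1}\sum_l X_l\VV_l$, one has
\[
\Delta_{ij}=-\tfrac1T\inn{\VV_i}{W_j}-\tfrac1T\inn{W_i}{\VV_j}+\tfrac1T\inn{W_i}{W_j}.
\]
First I would factor each summand so that exactly one factor is a $\Delta$ and the remaining factors are genuine error covariances. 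For part (a) this is simply $\hat{S}_{T,i,j}^2-S_{T,i,j}^2=2S_{T,i,j}\Delta_{ij}+\Delta_{ij}^2$; for parts (b) and (c) I would use the telescoping identity $\prod_k a_k-\prod_k b_k=\sum_k(a_k-b_k)\prod_{l<k}b_l\prod_{l>k}a_l$ (with $a_k$ the $\hat{S}$-factors and $b_k$ the $S$-factors) and then replace each remaining $\hat{S}$ by $S+\Delta$. This produces, in every case, a leading family of ``single-$\Delta$'' terms of the form $\sum S\cdots S\,\Delta$ together with remainder terms carrying at least two $\Delta$-factors.

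The remainder terms are negligible by the uniform estimates of Lemma~\ref{lem:estimates}. Indeed, under the SIM-L scheme $T\asymp n$, so parts (a)--(b) of that lemma give $\max_{i,j}|\Delta_{ij}|=O_p(n^{1/(2r_1)+\epsilon-3/2})$, while $\max_{i\ne j}|S_{T,i,j}|$ is controlled by the same max-order argument of Lemma~\ref{lem:max_order}. Combining these maxima with the crude counts ($n^2$, $n^3$ or $n^4$ summands) and the order $\sum_{i\ne j}S_{T,i,j}^2=O_p(n)$, every term with two or more $\Delta$-factors is $o_p(1)$; since $r_1\ge 6$ all relevant exponents are strictly negative. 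This disposes of $\sum_{i\ne j}\Delta_{ij}^2$ in (a), of the $WW$-contribution of $\Delta$, and of the analogous higher-order pieces in (b) and (c).

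The real work is the leading single-$\Delta$ terms, e.g. $\sum_{i\ne j}S_{T,i,j}\Delta_{ij}$ in part (a). Here crude bounds do not suffice: the max bound and the Cauchy--Schwarz bound $(\sum S^2)^{1/2}(\sum\Delta^2)^{1/2}$ only give $O_p(1)$, not $o_p(1)$, so the required cancellation must come from the mean-zero structure of the errors. I would substitute the expansion of $\Delta_{ij}$ and factor out the common vector $b$, writing the dominant contribution as $G'b$ with $G=\frac1{T^2}\sum_{i\ne j}(\VV_i'\VV_j)\,X_j\VV_i\in\R^k$. Since $k$ is fixed, $|G'b|\le\|G\|\,\|b\|$ with $\|b\|=O_p((nT)^{-1/2})$, so it remains to show $\|G\|=O_p(1)$, i.e. $E|G_m|^2=O(1)$ for each coordinate. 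Because under $H_0$ the errors are independent across units with mean zero, $E[G]=0$, and the second moment reduces to a sum over index matchings in which every unit occurs an even number of times. Using the uniformly bounded sixth moments of Assumption~(B1) and the design bounds $a_1\|u\|^2\le\frac1T u'X_iX_i'u\le a_2\|u\|^2$ of Assumption~(B2), the leading diagonal matching contributes $O(n^2)$ over $O(n^2)$ pairs, giving $E|G_m|^2=O(n^4/T^4)=O(1)$; the remaining matchings, including those involving the third moments created by the within-unit centring of $\VV$, are of strictly smaller order. Hence the single-$\Delta$ term is $O_p(n^{-1})=o_p(1)$.

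The same bookkeeping handles (b) and (c): after telescoping, each leading term again contains one $\Delta$ carrying the factor $b$, so I would pull $b$ out and bound the resulting error-polynomial by its variance. The main obstacle is precisely this variance computation for the four-index $4$-cycle sum in (c), where the matching combinatorics is heaviest; it is the independence across units under $H_0$ that forces the surviving matchings to be few enough and low enough in order that, after dividing by the normalising powers of $T$, the whole term vanishes. Collecting the leading and remainder estimates then yields $\cvp 0$ in all three parts.
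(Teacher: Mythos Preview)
Your overall strategy coincides with the paper's: expand $\hat S_{T,i,j}=S_{T,i,j}+\Delta_{ij}$, kill the subleading pieces with the uniform maxima of Lemma~\ref{lem:estimates}, and treat the surviving ``hard'' sums by factoring out $b=\hat\beta-\beta$ and controlling the remaining error-polynomial. The only methodological difference is in this last step: where you propose a direct second-moment (index-matching) bound $E\|G\|^2=O(1)$, the paper packages the same sum as $\binom{n}{m}U_n$ for a $U$-statistic whose kernel has bounded variance and mean $o(T^{-1/2})$, and quotes the $U$-statistic CLT to obtain $\sqrt n\,U_n=O_p(1)$. These are the same argument in different clothing; the $U$-statistic formulation is tidier for~(b) and~(c), while your version is more elementary.

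There is, however, a genuine gap in your remainder analysis for part~(c). You assert that every term carrying two or more $\Delta$-factors is $o_p(1)$ from the max bounds together with $\sum_{i\ne j}S_{T,i,j}^2=O_p(n)$. This is not true with four free indices: a two-$\Delta$ term in which both $\Delta$'s contribute their $\VV W$-piece, for instance
\[
\frac1{T^4}\sum_{i\ne j\ne l\ne s}\VV_i'\VV_j\,W_j'\VV_l\,\VV_i'\VV_s\,W_s'\VV_l,
\]
is only $O_p(n^{4\delta})$ under the crude maxima (two factors $|\VV'\VV|=O_p(\sqrt T\,n^{\delta})$ and two factors $|W'\VV|=O_p(n^{-1/2+\delta})$, times $n^4/T^4$), hence not $o_p(1)$. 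This is precisely the paper's $O_n$, and the paper does \emph{not} dispose of it by max bounds: it pulls out $b\otimes b$ and runs the same $U$-statistic/variance argument as for the single-$W$ term $Q_n$. So your plan is right, but in~(c) the cancellation machinery must be applied to these quadratic-in-$b$ pieces as well, not only to the single-$\Delta$ leading terms.
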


\begin{proof}	
	(a) We have 
	
	$$\begin{aligned}
	\sum\limits_{i\not=j}^n \left\{\SSS_{T,i,j}^2-S_{T,i,j}^2\right\}
	&=\frac1{T^2}\sum\limits_{i\not=j}^n\left\{\left((\VV_i^\prime-W_i^\prime)(\VV_j-W_j)\right)^2-\left(\VV_i^\prime\VV_j\right)^2\right\} \\
	&=\frac1{T^2}\sum\limits_{i\not=j}^n\left\{-2\VV_i^\prime\VV_j\left(W_i^\prime\VV_j+\VV_i^\prime W_j-W_i^\prime W_j\right)+\left(W_i^\prime\VV_j+\VV_i^\prime W_j-W_i^\prime W_j\right)^2\right\} .\\
	\end{aligned}
	$$
	Furthermore,
	$$\begin{aligned}
	\frac1{T^2}\sum\limits_{i\not=j}^n{|\VV_i^\prime\VV_jW_i^\prime W_j|}&=\frac1{T^2}\sum\limits_{i\not=j}^n{|\sqrt T\xi_{ij}W_i^\prime W_j|}\ \ \text{set}\ \VV_i^\prime\VV_j=\sqrt T\cdot\xi_{ij}\   (\xi_{ij}\text{ tends to}\ N(0,1))\\
	&\leq \frac1{T^2}\left(\sum\limits_{i\not=j}^n{|\sqrt T |\cdot|W_i^\prime W_j|}\right)\cdot \max\limits_{i,j}|\xi_{ij}|\\
	&=O_p\left(\frac{n^2\sqrt{T}  n^{\frac{1}{2r_1}+\epsilon}}{T^2n}\right)
    =O_p\left(n^{\frac{1}{2r_1}+\epsilon-1/2} \right).\\
	\end{aligned}$$
	After similar calculations, we conclude that
	$$\begin{aligned}
	\frac1{T^2}\sum\limits_{i\not=j}^n{|W_i^\prime\VV_jW_i^\prime \VV_j|}&=O_p\left(n^{\frac{1}{r_1}+\epsilon-1}\right),\\
	\frac1{T^2} \sum\limits_{i\not=j}^n{|W_i^\prime\VV_jW_i^\prime W_j|}&=O_p\left(n^{\frac{1}{2r_1}+\epsilon-3/2}\right),\\
	\frac1{T^2}\sum\limits_{i\not=j}^n{|W_i^\prime W_jW_i^\prime W_j|}&=O_p\left(n^{-2}\right).\\
	\end{aligned}$$
	
	\bigskip

	The last result we need to show is 
	\[
	M_n :
	= \sum\limits_{i\not=j}^n\frac{\VV_i^\prime\VV_jW_i^\prime
		\VV_j+\VV_i^\prime\VV_j\VV_i^\prime W_j}{T^2}  =o_p(1).
	\]
	Consider the $U$-statistic
	\[  U_n = \frac1{\binom{n}{2}}\sum_{\{i,j\}} h_T(\VV_i,\VV_j),
	\quad h_T(\VV_i,\VV_j) =
	\frac{\VV_i^\prime\VV_j  \VV_j'X_i' 
		+\VV_i^\prime\VV_j\VV_i' X_j' }{T }.
	\]
	We have
	\begin{equation}
	\label{eq:Mn}
	M_n = \sum\limits_{i\not=j}^n\frac{\VV_i^\prime\VV_j  \VV_j'X_i' 
		+\VV_i^\prime\VV_j\VV_i' X_j' }{T^2} \cdot (\hat
	\beta-\beta)
	= \frac2T \binom{n}{2} U_n \cdot (\hat\beta-\beta).
	\end{equation}
	Because the dimension $k$ is fixed, we suppose in the following that
	$k=1$ to simplify the presentation. Thus $X_j'$ is a $T\times 1$
	vector. 
	By direct calculations, one can show that
	\[  E(\VV_i^\prime\VV_j  \VV_i'X_j')^2 =E(\vv_{11}^2)
	\left\{ E(\vv_{11}^4) + (T-1) [E(\vv_{11}^2)]^2\right\}
	\sum_{t=1}^Tx_{jt}^2.
	\]
	Therefore there exist positive constants $a_2>a_1>0$, such that
	\[
	a_1 T^2 \le       E(\VV_i^\prime\VV_j  \VV_i'X_j')^2\le a_2T^2.
	\]
	It follows that exist positive constants $b_2>b_1>0$, such that
	\begin{equation}
	\label{eq:moment2}
	b_1  \le       E \{ h_T^2(\VV_i, \VV_j)\} \le b_2.
	\end{equation}
	Similarly, one can show that
	\begin{equation}
	\label{eq:moment1}
	E \{ \sqrt T h_T (\VV_i, \VV_j)\}  =  o(1).
	\end{equation}
	By  CLT for $U$-statistic (e.g. Theorem 12.3 in \cite{van1998asymptotic}), we find that 
	$\sqrt n  U_n $ is asymptotic Gaussian.
	Finally, we have by \eqref{eq:Mn}-\eqref{eq:moment2}-\eqref{eq:moment1}
	\[
	M_n = \frac2T \binom{n}{2} U_n \cdot (\hat\beta-\beta)
	=\frac2T \binom{n}{2} \cdot \frac1{\sqrt n} O_p(1)  \cdot O_p(n^{-1})
	= O_p(1/\sqrt{n}).
	\]
	The proof of the (a) is   complete.
	
	\bigskip
	
	(b) We have 
	\begin{align*}
	LHS
	&= \frac1{T^4}\sum\limits_{i\neq j\neq l}^n\left\{ {\left((\VV_i^\prime-W_i^\prime)(\VV_j-W_j)\right)^2\left((\VV_j^\prime-W_j^\prime)(\VV_l-W_l)\right)^2 -\left(\VV_i^\prime\VV_j\right)^2\left(\VV_j^\prime\VV_l\right)^2}\right\}\\
	&= \frac1{T^4}\sum\limits_{i\neq j\neq l}^n \left\{{\left(\VV_i^\prime\VV_j\right)^2\left(\VV_j^\prime W_l+W_j^\prime \VV_l-W_j^\prime W_l \right)^2+\left(\VV_j^\prime\VV_l\right)^2\left(\VV_i^\prime W_j+W_i^\prime \VV_j-W_i^\prime W_j \right)^2}\right\}\\
	& +\frac1{T^4} \sum\limits_{i\neq j\neq l}^n {\left(\VV_j^\prime W_l+W_j^\prime \VV_l-W_j^\prime W_l \right)^2\left(\VV_i^\prime W_j+W_i^\prime \VV_j-W_i^\prime W_j \right)^2}\\
	& -2\frac1{T^4} \sum\limits_{i\neq j\neq l}^n {\left(\VV_i^\prime\VV_j\right)\left(\VV_j^\prime W_l+W_j^\prime \VV_l-W_j^\prime W_l \right)^2\left(\VV_i^\prime W_j+W_i^\prime \VV_j-W_i^\prime W_j \right)}\\
	& -2 \frac1{T^4}\sum\limits_{i\neq j\neq l}^n {\left(\VV_j^\prime\VV_l\right) \left(\VV_i^\prime W_j+W_i^\prime \VV_j-W_i^\prime W_j \right)^2\left(\VV_j^\prime W_l+W_j^\prime \VV_l-W_j^\prime W_l \right)}\\
	& +4\frac1{T^4}\sum\limits_{i\neq j\neq l}^n {\left(\VV_i^\prime\VV_j\right)\left(\VV_j^\prime W_l+W_j^\prime \VV_l-W_j^\prime W_l \right)\left(\VV_j^\prime\VV_l\right) \left(\VV_i^\prime W_j+W_i^\prime \VV_j-W_i^\prime W_j \right)}\\
	& -2\frac1{T^4} \sum\limits_{i\neq j\neq l}^n{\left(\VV_i^\prime\VV_j\right)^2 \left(\VV_j^\prime\VV_l\right) \left(\VV_j^\prime W_l+W_j^\prime \VV_l-W_j^\prime W_l \right)}\\
	& -2 \frac1{T^4}\sum\limits_{i\neq j\neq l}^n{\left(\VV_j^\prime\VV_l\right)^2 \left(\VV_i^\prime\VV_j\right) \left(\VV_i^\prime W_j+W_i^\prime \VV_j-W_i^\prime W_j \right)}.
    \end{align*}
	It is easy to conclude that
	\begin{align*}
	&\frac1{T^4}\sum\limits_{i\neq j\neq l}^n {\left|\VV_i^\prime \VV_j \VV_i^\prime \VV_j \VV_j^\prime W_l\VV_j^\prime W_l \right|} = O_p\left(n^{\frac{2}{r_1}+\epsilon-1} \right),\\
	&\frac1{T^4}\sum\limits_{i\neq j\neq l}^n {\left|\VV_i^\prime \VV_j \VV_i^\prime \VV_j W_j^\prime W_lW_j^\prime W_l \right|} = O_p\left(n^{\frac{1}{r_1}+\epsilon-2} \right),\\
	& \frac1{T^4}\sum\limits_{i\neq j\neq l}^n{\left|\VV_i^\prime \VV_j \VV_i^\prime \VV_j  \VV_j^\prime W_lW_j^\prime W_l \right|} = O_p\left(n^{\frac{3}{2r_1}+\epsilon-3/2} \right),\\
	&\frac1{T^4} \sum\limits_{i\neq j\neq l}^n {\left|\VV_j^\prime W_l \VV_j^\prime W_l \VV_i^\prime W_j \VV_i^\prime W_j \right|} = O_p\left(n^{\frac{2}{r_1}+\epsilon-3} \right),\\
	& \frac1{T^4}\sum\limits_{i\neq j\neq l}^n {\left|\VV_j^\prime W_l W_j^\prime W_l \VV_i^\prime W_j W_i^\prime W_j \right|} = O_p\left(n^{\frac{1}{r_1}+\epsilon-4} \right),\\
	& \frac1{T^4}\sum\limits_{i\neq j\neq l}^n {\left|
		W_j^\prime W_l W_j^\prime W_l W_i^\prime W_j W_i^\prime W_j \right|} = O_p\left(n^{-5} \right),\\
	&\frac1{T^4}\sum\limits_{i\neq j\neq l}^n {\left|
		\VV_i^\prime \VV_j \VV_j^\prime W_l\VV_j^\prime W_l \VV_i^\prime W_j \right|} = O_p\left(n^{\frac{2}{r_1}+\epsilon-2} \right),\\
	&\frac1{T^4}\sum\limits_{i\neq j\neq l}^n {\left|
		\VV_i^\prime \VV_j \VV_j^\prime W_l\VV_j^\prime W_l W_i^\prime W_j \right|} = O_p\left(n^{\frac{3}{2r_1}+\epsilon-5/2} \right),\\
	&\frac1{T^4}\sum\limits_{i\neq j\neq l}^n{\left|
		\VV_i^\prime \VV_j \VV_j^\prime W_lW_j^\prime W_l W_i^\prime W_j \right|} = O_p\left(n^{\frac{1}{r_1}+\epsilon-3} \right),\\
	&\frac1{T^4}\sum\limits_{i\neq j\neq l}^n {\left|
		\VV_i^\prime \VV_j W_j^\prime W_lW_j^\prime W_l W_i^\prime W_j \right|} = O_p\left(n^{\frac{1}{2r_1}+\epsilon-7/2} \right),\\
	&\frac1{T^4}\sum\limits_{i\neq j\neq l}^n {\left|
		\VV_i^\prime \VV_j\VV_i^\prime \VV_j\VV_j^\prime \VV_l W_i^\prime W_j \right|} = O_p\left(n^{\frac{3}{2r_1}+\epsilon-1/2} \right). 
	\end{align*}
	The last result we need to show is
	\[L_n:=\sum\limits_{i\neq j\neq l}^n \frac{
		\VV_i^\prime \VV_j\VV_i^\prime \VV_j\VV_j^\prime \VV_l \VV_j^\prime W_l + \VV_i^\prime \VV_j\VV_i^\prime \VV_j\VV_j^\prime \VV_l W_j^\prime \VV_l }{T^4} = o_p(1). \]
	Consider the $U$-statistic
	\[U_n = \frac1{\binom{n}{3}}\sum_{\{i,j,l\}} h_T(\VV_i,\VV_j,\VV_l),\quad h_T(\VV_i,\VV_j,\VV_l)=\frac{\VV_i^\prime \VV_j\VV_i^\prime \VV_j\VV_j^\prime \VV_l \VV_j^\prime X_l^\prime + \VV_i^\prime \VV_j\VV_i^\prime \VV_j\VV_j^\prime \VV_l \VV_l^\prime X_j^\prime }{T^2}. \]
	We have
	\[L_n = \frac{6}{T^2}\binom{n}{3}U_n\cdot (\hat{\beta}-\beta). \]
	One can show that
	$$\begin{aligned}
	&E\left(\VV_i^\prime \VV_j\VV_i^\prime \VV_j\VV_j^\prime \VV_l \VV_j^\prime X_l^\prime \right)^2 \\
	&= \sum_{t=1}^T x_{lt}^2 \cdot \Big(E(\vv_{11}^2)E(\vv_{11}^4)E(\vv_{11}^8)+ (T-1)E(\vv_{11}^2)E^3(\vv_{11}^4) +(T-1)E^2(\vv_{11}^2)E(\vv_{11}^4)E(\vv_{11}^6)\\
	&+(T-1)(T-2)E^3(\vv_{11}^2)E^2(\vv_{11}^4) +3(T-1)E^4(\vv_{11}^2)E(\vv_{11}^6)+2(T-1)(T-2)E^5(\vv_{11}^2)E(\vv_{11}^4)\\
	&+(T-1)E^3(\vv_{11}^2)E^2(\vv_{11}^4)+(T-1)(T-2)(T-3)E^7(\vv_{11}^2) \Big).
	\end{aligned}$$
	Therefore there exist positive constants $a_2>a_1>0$, such that
	\[a_1 T^4 \leq E\left(\VV_i^\prime \VV_j\VV_i^\prime \VV_j\VV_j^\prime \VV_l \VV_j^\prime X_l^\prime \right)^2 \leq a_2 T^4.  \]
	It follows that exist positive constants $b_2>b_1>0$, such that
	\[b_1 \leq E\left\{h_T^2(\VV_i,\VV_j,\VV_l) \right\} \leq b_2. \]
	One can show that
	\[E\{\sqrt{T}h_T(\VV_i,\VV_j,\VV_l) \}=o(1). \]
	By CLT for $U$-statistic, we find that $\sqrt{n}U_n$ is asymptotic Gaussian. Finally, we have 
	\[L_n=\frac{6}{T^2}\binom{n}{3}U_n\cdot (\hat{\beta}-\beta)=\frac{6}{T^2}\binom{n}{3} \cdot \frac{1}{\sqrt{n}}\cdot O_p\left(\frac{1}{n} \right)=O_p\left(\frac{1}{\sqrt{n}} \right). \]
	The proof of (b) is completed.
	
	(c) We have
	\begin{align*}
	LHS&=\sum\limits_{i\neq j\neq l\neq s}^n \left\{{\SSS_{T,i,j}\SSS_{T,j,l}\SSS_{T,i,s}\SSS_{T,s,l}-S_{T,i,j}S_{T,j,l}S_{T,i,s}S_{T,s,l}}\right\}\\
	&= \frac1{T^4}\sum\limits_{i\neq j\neq l\neq s}^n {(\VV_i^\prime-W_i^\prime)(\VV_j-W_j)(\VV_j^\prime-W_j^\prime)(\VV_l-W_l) (\VV_i^\prime-W_i^\prime)(\VV_s-W_s)(\VV_s^\prime-W_s^\prime)(\VV_l-W_l)}\\
	& - \frac1{T^4}\sum\limits_{i\neq j\neq l\neq s}^n {\VV_i^\prime \VV_j \VV_j^\prime \VV_l \VV_i^\prime \VV_s \VV_s^\prime \VV_l}\\
	&= \frac1{T^4}\sum\limits_{i\neq j\neq l\neq s}^n
	\Bigg((\VV_i^\prime \VV_j \VV_j^\prime \VV_l)\Big\{-\VV_i^\prime \VV_s W_s^\prime \VV_l-\VV_i^\prime \VV_s \VV_s^\prime W_l +\VV_i^\prime \VV_s W_s^\prime W_l-W_i^\prime \VV_s \VV_s^\prime \VV_l +W_i^\prime \VV_s W_s^\prime \VV_l \\
	& +W_i^\prime \VV_s\VV_s\prime W_l-W_i^\prime \VV_s W_s^\prime W_l- \VV_i^\prime W_s \VV_s^\prime \VV_l +W_i^\prime \VV_s W_s^\prime \VV_l +W_i^\prime \VV_s \VV_s^\prime W_l -W_i^\prime \VV_s W_s^\prime W_l + W_i^\prime W_s \VV_s^\prime \VV_l\\
	&  -W_i^\prime W_s W_s^\prime \VV_l -W_i^\prime W_s \VV_s^\prime    W_l+ W_i^\prime W_s W_s^\prime W_l \Big\}\\
    &+\Big\{-\VV_i^\prime \VV_j W_j^\prime \VV_l-\VV_i^\prime \VV_j \VV_j^\prime W_l +\VV_i^\prime \VV_j W_j^\prime W_l-W_i^\prime \VV_j \VV_j^\prime \VV_l +W_i^\prime \VV_j W_j^\prime \VV_l +W_i^\prime \VV_j\VV_j\prime W_l-W_i^\prime \VV_j W_j^\prime W_l\\
	& - \VV_i^\prime W_j \VV_j^\prime \VV_l +W_i^\prime \VV_j W_j^\prime \VV_l +W_i^\prime \VV_j \VV_j^\prime W_l -W_i^\prime \VV_j W_j^\prime W_l + W_i^\prime W_j \VV_j^\prime \VV_l -W_i^\prime W_j W_j^\prime \VV_l -W_i^\prime W_j \VV_j^\prime W_l\\
	& + W_i^\prime W_j W_j^\prime W_l \Big\} \cdot \Big\{-\VV_i^\prime \VV_s W_s^\prime \VV_l-\VV_i^\prime \VV_s \VV_s^\prime W_l +\VV_i^\prime \VV_s W_s^\prime W_l-W_i^\prime \VV_s \VV_s^\prime \VV_l +W_i^\prime \VV_s W_s^\prime \VV_l \\
	& +W_i^\prime \VV_s\VV_s\prime W_l-W_i^\prime \VV_s W_s^\prime W_l- \VV_i^\prime W_s \VV_s^\prime \VV_l +W_i^\prime \VV_s W_s^\prime \VV_l +W_i^\prime \VV_s \VV_s^\prime W_l -W_i^\prime \VV_s W_s^\prime W_l + W_i^\prime W_s \VV_s^\prime \VV_l\\
	&  -W_i^\prime W_s W_s^\prime \VV_l -W_i^\prime W_s \VV_s^\prime W_l+ W_i^\prime W_s W_s^\prime W_l \Big\}\Bigg).
	\end{align*}
	It is easy to conclude that 
	$$\begin{aligned}
	&\frac1{T^4}\sum\limits_{i\neq j\neq l\neq s}^n {\left|\VV_i^\prime \VV_j W_j^\prime \VV_l \VV_i^\prime \VV_s W_s^\prime W_l \right|}=O_p\left(n^{\frac{3}{2r_1}+\epsilon-1/2} \right),\\
	&\frac1{T^4} \sum\limits_{i\neq j\neq l\neq s}^n {\left|\VV_i^\prime \VV_j W_j^\prime \VV_l W_i^\prime \VV_s W_s^\prime \VV_l \right|}=O_p\left(n^{\frac{2}{r_1}+\epsilon-1} \right),\\
	& \frac1{T^4}\sum\limits_{i\neq j\neq l\neq s}^n {\left|\VV_i^\prime \VV_j W_j^\prime \VV_l W_i^\prime W_s W_s^\prime \VV_l \right|}=O_p\left(n^{\frac{3}{2r_1}+\epsilon-3/2}\right),\\
	&\frac1{T^4} \sum\limits_{i\neq j\neq l\neq s}^n {\left|\VV_i^\prime \VV_j W_j^\prime \VV_l W_i^\prime W_s W_s^\prime W_l \right|}=O_p\left(n^{\frac{1}{r_1}+\epsilon-2} \right),\\
	&\frac1{T^4}\sum\limits_{i\neq j\neq l\neq s}^n {\left|\VV_i^\prime \VV_j W_j^\prime W_l \VV_i^\prime \VV_s W_s^\prime W_l \right|}=O_p\left(n^{\frac{1}{r_1}+\epsilon-1} \right),\\
	&\frac1{T^4}\sum\limits_{i\neq j\neq l\neq s}^n {\left|\VV_i^\prime \VV_j W_j^\prime W_l W_i^\prime W_s W_s^\prime W_l \right|}=O_p\left(n^{\frac{1}{2r_1}+\epsilon-5/2} \right),\\
	&\frac1{T^4}\sum\limits_{i\neq j\neq l\neq s}^n{\left|W_i^\prime \VV_j W_j^\prime \VV_l W_i^\prime W_s W_s^\prime \VV_l \right|}=O_p\left(n^{\frac{3}{2r_1}+\epsilon-5/2} \right),\\
	&\frac1{T^4}\sum\limits_{i\neq j\neq l\neq s}^n {\left|W_i^\prime \VV_j W_j^\prime \VV_lW_i^\prime W_s W_s^\prime W_l \right|}=O_p\left(n^{\frac{1}{r_1}+\epsilon-3} \right),\\
	&\frac1{T^4}\sum\limits_{i\neq j\neq l\neq s}^n {\left|W_i^\prime W_j W_j^\prime \VV_l W_i^\prime W_s W_s^\prime W_l \right|}=O_p\left(n^{\frac{1}{2r_1}+\epsilon-7/2} \right)\\
	&\frac1{T^4}\sum\limits_{i\neq j\neq l\neq s}^n {\left|W_i^\prime W_j W_j^\prime W_l W_i^\prime W_s W_s^\prime W_l \right|}=O_p\left(n^{-4} \right).
	\end{aligned}$$
	We need to show that 
	$$\begin{aligned}
	&O_n:=\frac1{T^4}\sum\limits_{i\neq j\neq l\neq s}^n \VV_i^\prime \VV_j W_j^\prime \VV_l\VV_i^\prime \VV_s W_s^\prime \VV_l =o_p(1).\\
	& Q_n:=\frac1{T^4} \sum\limits_{i\neq j\neq l\neq s}^n {\VV_i^\prime \VV_j \VV_j^\prime \VV_l\VV_i^\prime \VV_s \VV_s^\prime W_l }=o_p(1).
	\end{aligned}$$
	Consider the $U$-statistic
	\[U_n = \frac1{\binom{n}{4}}\sum_{\{i,j,l,s\}} h_T(\VV_i,\VV_j,\VV_l,\VV_s),\quad h_T(\VV_i,\VV_j,\VV_l,\VV_s)=\frac{\VV_i^\prime \VV_j  \VV_l^\prime X_j^\prime \VV_i^\prime \VV_s  \VV_l^\prime X_s^\prime}{T^2}, \]
	so we have
	\[O_n = \frac{24}{T^2}\binom{n}{4}U_n\cdot (\hat{\beta}-\beta)^2. \]
	One can show that
	$$\begin{aligned}
	&E\left(\VV_i^\prime \VV_j  \VV_l^\prime X_j^\prime \VV_i^\prime \VV_s  \VV_l^\prime X_s^\prime \right)^2\\
	&= \sum_{t=1}^T x_{jt}^2 x_{st}^2 \cdot \Big(E^2(\vv_{11}^2)E^2(\vv_{11}^4)+(T-1)E^2(\vv_{11}^2)E^2(\vv_{11}^4)+2(T-1)E^4(\vv_{11}^2)E(\vv_{11}^4)\\
	&\hskip1cm +2(T-1)(T-2)E^4(\vv_{11}^2)E(\vv_{11}^4)+(T-1)E^6(\vv_{11}^2)+(T-1)(T-2)E^6(\vv_{11}^2)\\
	&\hskip1cm +(T-1)(T-2)(T-3)E^6(\vv_{11}^2) \Big).
	\end{aligned}$$
	Therefore there exist positive constants $a_2>a_1>0$, such that
	\[a_1 T^4\leq E\left(\VV_i^\prime \VV_j  \VV_l^\prime X_j^\prime \VV_i^\prime \VV_s  \VV_l^\prime X_s^\prime \right)^2 \leq a_2T^4. \]
	It follows that exist positive constants $b_2>b_2>0$, such that
	\[b_1\leq E\{h^2_T(\VV_i,\VV_j,\VV_l,\VV_s) \} \leq b_2. \]
	Similarly, one can show that
	\[E\{ \sqrt{T}h_T(\VV_i,\VV_j,\VV_l,\VV_s)\}=o(1). \]
	By CLT for $U$-statistic, we find that $\sqrt{n}U_n$ is asymptotic Gaussian. Finally, we have
	\[O_n = \frac{24}{T^2}\binom{n}{4}U_n\cdot (\hat{\beta}-\beta)^2= \frac{24}{T^2}\binom{n}{4}\cdot \frac{1}{\sqrt{n}}O_p(1)\cdot O_p\left(\frac{1}{n^2} \right)=O_p\left(\frac{1}{\sqrt{n}} \right). \]
	Similarly, one can prove that $Q_n=O_p\left(\frac{1}{\sqrt{n}} \right)$.
	The proof of (c) is completed.
\end{proof}
\begin{lemma} \label{lem:equal index}
	Suppose 
	Assumptions (B1)-(B2) hold.  We have
	\begin{enumerate}
		\item[(a)\ ]  $ \sum\limits_{i=1}^n\left\{\hat{S}_{T,i,i}^2-S_{T,i,i}^2\right\}\cvp0$.
		\item[(b)\ ]  $ \sum\limits_{i=1}^n\left\{\hat{S}_{T,i,i}^4-S_{T,i,i}^4\right\}\cvp0$.
	\end{enumerate}
\end{lemma}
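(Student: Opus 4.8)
The plan is to reduce both claims to the uniform per-coordinate estimates already established in Lemma~\ref{lem:estimates}, using only the crude bound that a sum of $n$ terms is dominated by $n$ times its maximum in absolute value. No fresh probabilistic input is needed beyond that lemma.

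For part (a), I would bound
\[
\left| \sum_{i=1}^n \left(\hat{S}_{T,i,i}^2 - S_{T,i,i}^2\right) \right| \le n \max_{1\le i\le n} \left| \hat{S}_{T,i,i}^2 - S_{T,i,i}^2 \right|
\]
and then apply Lemma~\ref{lem:estimates}(f), which gives $\max_i |\hat{S}_{T,i,i}^2 - S_{T,i,i}^2| = O_p(n^{1/(2r_1)+\epsilon-3/2})$. Multiplying by $n$ produces $O_p(n^{1/(2r_1)+\epsilon-1/2})$, and since Assumption~(B1) forces $r_1\ge 6$ we have $1/(2r_1)\le 1/12$, so the exponent is strictly negative once $\epsilon$ is chosen small. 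The bound therefore tends to $0$ in probability.

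For part (b), I would first factor
\[
\hat{S}_{T,i,i}^4 - S_{T,i,i}^4 = \left(\hat{S}_{T,i,i}^2 - S_{T,i,i}^2\right)\left(\hat{S}_{T,i,i}^2 + S_{T,i,i}^2\right).
\]
The second factor is uniformly $O_p(1)$: by Lemma~\ref{lem:estimates}(c) and (e) both $\max_i S_{T,i,i}$ and $\max_i \hat{S}_{T,i,i}$ converge to $\sigma^2$, whence $\max_i (\hat{S}_{T,i,i}^2 + S_{T,i,i}^2) = O_p(1)$. Combining this with the bound derived in part (a) gives
\[
\left| \sum_{i=1}^n \left(\hat{S}_{T,i,i}^4 - S_{T,i,i}^4\right) \right| \le \left(\max_{1\le i\le n}\left(\hat{S}_{T,i,i}^2 + S_{T,i,i}^2\right)\right) n \max_{1\le i\le n}\left| \hat{S}_{T,i,i}^2 - S_{T,i,i}^2 \right| = O_p\!\left(n^{1/(2r_1)+\epsilon-1/2}\right),
\]
which vanishes for exactly the same reason.

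Unlike Lemma~\ref{lem:trace2}, this statement carries no genuine difficulty: the substantive work---controlling the regression correction $W_i = X_i'(\hab-\beta)$ uniformly in $i$---is already packaged inside Lemma~\ref{lem:estimates}. The only point to watch is the exponent arithmetic, namely that the factor $n^{-3/2}$ in the sharp single-coordinate estimate (f) overwhelms the factor $n$ lost when passing from the maximum to the sum; this is precisely what the sixth-moment assumption guarantees. I would therefore expect the write-up to be short and essentially mechanical.
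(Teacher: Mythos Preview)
Your proof is correct and lands on the same rate $O_p(n^{1/(2r_1)+\epsilon-1/2})$ as the paper, via a slightly different bookkeeping. The paper factors $\hat S_{T,i,i}^2-S_{T,i,i}^2=(\hat S_{T,i,i}+S_{T,i,i})(\hat S_{T,i,i}-S_{T,i,i})$ and then applies Cauchy--Schwarz to the resulting sum, invoking Lemma~\ref{lem:estimates}(d) for the difference and the trivial $O_p(n)$ bound for $\sum_i(\hat S_{T,i,i}+S_{T,i,i})^2$; part~(b) is handled the same way with the factoring $\hat S^4-S^4=(\hat S^2+S^2)(\hat S^2-S^2)$. You instead appeal directly to Lemma~\ref{lem:estimates}(f) and bound the sum by $n$ times the maximum, which is marginally more direct since it bypasses both the first factoring step and the Cauchy--Schwarz inequality. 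Either route is a routine consequence of Lemma~\ref{lem:estimates}; nothing is gained or lost by the choice.
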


\begin{proof}
	(a)
	Indeed, we have 
	\begin{align*}
	\left| \sum\limits_{i=1}^n \left\{  \hat{S}_{T,i,i}^2-S_{T,i,i}^2 \right\}\right| 
	& \le 
	\sum\limits_{i=1}^n 
	\left| \hat{S}_{T,i,i} +S_{T,i,i} \right|   \left| \hat{S}_{T,i,i} -S_{T,i,i} \right| \\
	&\le 
	\left(  \sum\limits_{i=1}^n   \left| \hat{S}_{T,i,i} +S_{T,i,i}  \right|^2\right)^{1/2}
	\left(  \sum\limits_{i=1}^n   \left| \hat{S}_{T,i,i} -S_{T,i,i}  \right|^2\right)^{1/2}\\
	& =   \left\{ O_p(n) \right\}^{1/2} \cdot
	\left\{  O_p  \pa{ n \frac1{T^2} n^{\frac{1}{r_1}+\epsilon-1} }   \right\}^{1/2}
	= O_p \pa{   n^{\frac{1}{2r_1}+\epsilon-1/2} }, 
	\end{align*}
	where the last equality comes from Lemma~\ref{lem:estimates}.
	The proof is completed.
	
	(b) We have 
	\begin{align*}
	\left| \sum\limits_{i=1}^n \left\{  \hat{S}_{T,i,i}^4-S_{T,i,i}^4 \right\}\right| 
	& \le 
	\sum\limits_{i=1}^n 
	\left| \hat{S}_{T,i,i}^2 +S_{T,i,i}^2 \right|   \left| \hat{S}_{T,i,i}^2 -S_{T,i,i} ^2\right| \\
	&\le 
	\left(  \sum\limits_{i=1}^n   \left| \hat{S}_{T,i,i}^2 +S_{T,i,i} ^2 \right|^2\right)^{1/2}
	\left(  \sum\limits_{i=1}^n   \left| \hat{S}_{T,i,i}^2 -S_{T,i,i} ^2 \right|^2\right)^{1/2}\\
	&
	= O_p \pa{    n^{\frac{1}{2r_1}+\epsilon-1/2}  }, 
	\end{align*}
	where the last equality comes from Lemma~\ref{lem:estimates}.
	The proof is completed.
\end{proof}

\begin{corollary} \label{cor:diff}
	Suppose Assumption 1-2 hold.  We have
	\begin{enumerate}
		\item[(a)\ ]  $ \sum\limits_{i,j}^n\left\{\hat{S}_{T,i,j}^2-S_{T,i,j}^2\right\}\cvp0$.
		\item[(b)\ ]  $ \sum\limits_{i,j,l}^n\left\{\hat{S}_{T,i,j}^2\hat{S}_{T,j,l}^2-S_{T,i,j}^2S_{T,j,l}^2\right\}\cvp0$.
		\item[(c)\ ]  $ \sum\limits_{i,j,l,s}^n\left\{\hat{S}_{T,i,j}\hat{S}_{T,j,l}\hat{S}_{T,i,s}\hat{S}_{T,l,s}-S_{T,i,j}S_{T,j,l}S_{T,i,s}S_{T,l,s}\right\}\cvp0$.
	\end{enumerate}
\end{corollary}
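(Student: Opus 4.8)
The plan is to recognise that each of (a)--(c) is the unrestricted-index (full trace) counterpart of the corresponding distinct-index sum already controlled in Lemma~\ref{lem:trace2}, and to reduce it to that lemma by splitting the summation according to which indices coincide.

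For (a) I would write $\sum_{i,j}^n\{\hat{S}_{T,i,j}^2-S_{T,i,j}^2\}=\sum_{i\neq j}\{\hat{S}_{T,i,j}^2-S_{T,i,j}^2\}+\sum_{i}\{\hat{S}_{T,i,i}^2-S_{T,i,i}^2\}$; the first sum is $o_p(1)$ by Lemma~\ref{lem:trace2}(a) and the second is exactly the quantity shown to be $o_p(1)$ in Lemma~\ref{lem:equal index}(a), so their sum vanishes in probability.

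For (b) and (c) I would partition the unrestricted triple (respectively quadruple) index set by coincidence pattern. The generic block in which all indices are distinct equals the sum handled in Lemma~\ref{lem:trace2}(b) (respectively (c)), hence is $o_p(1)$; the fully collapsed block in which all indices are equal reduces to $\sum_i\{\hat{S}_{T,i,i}^4-S_{T,i,i}^4\}$, which is $o_p(1)$ by Lemma~\ref{lem:equal index}(b). The remaining mixed blocks, where some but not all indices coincide, are not among the already proved statements and must be estimated directly. For each I would telescope $\hat{A}\hat{B}-AB=(\hat{A}-A)\hat{B}+A(\hat{B}-B)$ and bound the pieces with the uniform diagonal estimates of Lemma~\ref{lem:estimates} (the entries $\hat{S}_{T,i,i},S_{T,i,i}$ lie uniformly within $o_p(1)$ of $\sigma^2$), together with elementary second-moment control of the off-diagonal entries, each of size $O_p(T^{-1/2})$ under the null, and the convergences from Lemmas~\ref{lem:trace2} and \ref{lem:equal index}.

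The main obstacle is this last step. Because Lemma~\ref{lem:trace2} yields only the signed off-diagonal difference sums, not their absolute values, one cannot simply factor a uniform bound out of a mixed block. Instead each mixed block has to be rearranged so that a near-constant factor (a diagonal entry, uniformly close to $\sigma^2$) multiplies a difference sum that is genuinely $o_p(1)$, while the complementary remainder is majorised by a uniform maximum times a convergent sum; one then checks, using $T\asymp n$ under the SIM-L scheme and $r_1\geq 6$, that the exponent of $n$ in every resulting bound is strictly negative. The bookkeeping over the several mixed patterns in (c) is the most laborious part, but it is routine once the factorisation and the estimates of Lemma~\ref{lem:estimates} are in place.
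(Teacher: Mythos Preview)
Your proposal is correct and follows the same route as the paper. The paper's proof is a single line: ``This corollary immediately holds with Lemma~\ref{lem:trace2} and Lemma~\ref{lem:equal index}.'' Your decomposition into distinct-index blocks (handled by Lemma~\ref{lem:trace2}) and fully collapsed blocks (handled by Lemma~\ref{lem:equal index}) is exactly what that sentence encodes.

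Where you go further than the paper is in flagging the mixed-coincidence blocks in (b) and (c), which the paper's one-liner silently absorbs. Your plan for these---telescope $\hat A\hat B-AB$, pull out a diagonal factor that is uniformly $\sigma^2+o_p(1)$ via Lemma~\ref{lem:estimates}, and reduce to a signed difference sum already known to be $o_p(1)$, while bounding the residual by a uniform maximum times a convergent sum---is the right way to close the argument, and your caveat that Lemma~\ref{lem:trace2} gives only signed (not absolute) control is well taken. This extra care is not a different approach; it is the missing detail behind the paper's ``immediately''.
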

This corollary immediately holds with Lemma \ref{lem:trace2} and Lemma \ref{lem:equal index}.

\subsection{Main calculations}

\begin{proposition}\label{trace2_prop}
	Under Assumption (B1)-(B2),  we have
	\[ tr\left(\hat{R}_T^2-R_T^2\right) =o_p(1).
	\]
\end{proposition}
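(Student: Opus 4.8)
The plan is to transfer the statement from the correlation matrices to the entries of the sample covariance matrices, where Lemma~\ref{lem:estimates} and Corollary~\ref{cor:diff} apply. Since both $\hat R_T$ and $R_T$ are correlation matrices with unit diagonal, $tr(\hat R_T^2)=n+\sum_{i\neq j}\hat\rho_{ij}^2$ and similarly for $R_T$, so the diagonal contributions cancel and
\[
  tr(\hat R_T^2)-tr(R_T^2)=\sum_{i,j}\left(\frac{\hat S_{T,ij}^2}{\hat S_{T,ii}\hat S_{T,jj}}-\frac{S_{T,ij}^2}{S_{T,ii}S_{T,jj}}\right),
\]
the terms with $i=j$ vanishing identically. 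By Lemma~\ref{lem:estimates}(c),(e) every diagonal entry $S_{T,ii}$ and $\hat S_{T,ii}$ concentrates around $\sigma^2>0$ uniformly in $i$, so on an event of probability tending to one all the reciprocals above are uniformly bounded.

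First I would split each summand into a numerator part and a denominator part,
\[
  \frac{\hat S_{T,ij}^2-S_{T,ij}^2}{\hat S_{T,ii}\hat S_{T,jj}}
  +S_{T,ij}^2\left(\frac{1}{\hat S_{T,ii}\hat S_{T,jj}}-\frac{1}{S_{T,ii}S_{T,jj}}\right),
\]
and bound the two resulting sums separately. The denominator sum is the routine one: by Lemma~\ref{lem:estimates}(d) the products $\hat S_{T,ii}\hat S_{T,jj}$ and $S_{T,ii}S_{T,jj}$ differ by $O_p(n^{1/(2r_1)+\epsilon-3/2})$ uniformly, while $\sum_{i,j}S_{T,ij}^2=tr(S_T^2)=O_p(n)$ --- the latter following from the first-step asymptotics \eqref{trace_error}, which give $tr(R_T^2)=O_p(n)$, together with the uniform upper bound on the $S_{T,ii}$. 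The product of these is $O_p(n^{1/(2r_1)+\epsilon-1/2})=o_p(1)$.

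The numerator sum carries the real difficulty, because I must retain the cancellation present in $\sum_{i,j}(\hat S_{T,ij}^2-S_{T,ij}^2)$ and cannot afford to bound it term by term in absolute value. The plan is to peel off the constant weight $1/\sigma^4$,
\[
  \sum_{i,j}\frac{\hat S_{T,ij}^2-S_{T,ij}^2}{\hat S_{T,ii}\hat S_{T,jj}}
  =\frac{1}{\sigma^4}\sum_{i,j}\bigl(\hat S_{T,ij}^2-S_{T,ij}^2\bigr)
  +\sum_{i,j}\bigl(\hat S_{T,ij}^2-S_{T,ij}^2\bigr)\left(\frac{1}{\hat S_{T,ii}\hat S_{T,jj}}-\frac{1}{\sigma^4}\right).
\]
The constant-weight piece equals $\sigma^{-4}$ times the quantity shown to be $o_p(1)$ in Corollary~\ref{cor:diff}(a), hence vanishes; this is precisely the place where the signed cancellation is used. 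For the remaining piece the weight deviation is only $O_p(n^{1/r_1+\epsilon-1/2})$ by Lemma~\ref{lem:estimates}(c),(d), so here a crude bound by $\max_{i,j}|1/(\hat S_{T,ii}\hat S_{T,jj})-1/\sigma^4|\cdot\sum_{i,j}|\hat S_{T,ij}^2-S_{T,ij}^2|$ is admissible, provided the absolute-value sum is sublinear in $n$.

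The main obstacle, and the only step requiring genuine work, is exactly this sublinear bound: I need $\sum_{i,j}|\hat S_{T,ij}^2-S_{T,ij}^2|=O_p(n^{1/(2r_1)+\epsilon})$ rather than the trivial $O_p(n)$, and this does not follow from the signed statement of Corollary~\ref{cor:diff}(a). It can however be extracted from the term-by-term estimates already made in the proof of Lemma~\ref{lem:trace2}(a): writing $\hat S_{T,ij}^2-S_{T,ij}^2=\tfrac1{T^2}[-2\inn{\VV_i}{\VV_j}(\inn{W_i}{\VV_j}+\inn{\VV_i}{W_j}-\inn{W_i}{W_j})+(\inn{W_i}{\VV_j}+\inn{\VV_i}{W_j}-\inn{W_i}{W_j})^2]$, all summands except the bilinear term $\tfrac{-2}{T^2}\inn{\VV_i}{\VV_j}(\inn{W_i}{\VV_j}+\inn{\VV_i}{W_j})$ have absolute-value sums that are $o_p(1)$, while that term is bounded by $\tfrac{2}{T^2}\bigl(\sum_{i,j}|\inn{\VV_i}{\VV_j}|\bigr)\max_{i,j}|\inn{W_i}{\VV_j}+\inn{\VV_i}{W_j}|=O_p(T^{-2}\cdot n^2\sqrt T\cdot n^{1/(2r_1)+\epsilon-1/2})=O_p(n^{1/(2r_1)+\epsilon})$ via Lemma~\ref{lem:estimates}(a). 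Combining this with the weight deviation gives $O_p(n^{3/(2r_1)+2\epsilon-1/2})=o_p(1)$ once $r_1\ge6$, which closes the estimate.
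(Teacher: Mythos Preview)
Your proof is correct and follows essentially the same two-part decomposition as the paper's. The paper's split is the mirror image of yours: it writes $A_{ij}=\hat S_{T,i,j}^2\cdot(\text{denominator difference})$ and $B_{ij}=(\hat S_{T,i,j}^2-S_{T,i,j}^2)/(S_{T,i,i}S_{T,j,j})$, interchanging the roles of $\hat S$ and $S$ relative to yours. Both arguments handle the denominator-difference sum identically, via $\sum \hat S_{T,i,j}^2=O_p(n)$ (resp.\ $\sum S_{T,i,j}^2=O_p(n)$) together with Lemma~\ref{lem:estimates}(d).

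Where you differ is in the treatment of the numerator-difference sum, and here your argument is in fact more careful than the paper's. The paper dispatches its $B_{ij}$ term with the one line ``$\sum_{i\neq j}B_{ij}=o_p(1)$ holds with Lemma~\ref{lem:trace2}'', tacitly treating the varying denominators $S_{T,i,i}S_{T,j,j}$ as constant. You make this step explicit: you peel off the constant weight $1/\sigma^4$, invoke Corollary~\ref{cor:diff}(a) for the constant-weight piece, and then control the residual by establishing the absolute-value bound $\sum_{i,j}|\hat S_{T,i,j}^2-S_{T,i,j}^2|=O_p(n^{1/(2r_1)+\epsilon})$ directly from the expansion in the proof of Lemma~\ref{lem:trace2}(a). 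That extra step is correct and closes a detail the paper leaves implicit.
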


\begin{proof}
	It is easy to verify that
	\begin{equation}
	tr\left(\hat{R}_T^2-R_T^2\right)=\sum\limits_{i\not=j}^n  \left(  A_{ij} +B_{ij}\right) ,
	\end{equation}
	where
	\[
	A_{ij}=\frac{\SSS_{T,i,j}^2}{\SSS_{T,i,i}\SSS_{T,j,j}}-\frac{\SSS_{T,i,j}^2}{S_{T,i,i}S_{T,j,j}},\quad
	B_{ij}=\frac{\SSS_{T,i,j}^2-S_{T,i,j}^2}{S_{T,i,i}S_{T,j,j}}.
	\]
	We calculate $\sum\limits_{i\not=j}^n A_{ij}$ first. Note that
	\[A_{ij}=\SSS_{T,i,j}^2\left(\frac{S_{T,j,j}-\SSS_{T,j,j}}{\SSS_{T,j,j}S_{T,i,i}S_{T,j,j}}+\frac{S_{T,i,i}-\SSS_{T,i,i}}{\SSS_{T,i,i}\SSS_{T,j,j}S_{T,i,i}}
	\right).
	\]
	For the first term, using Lemma~\ref{lem:estimates} we have 
	$$\begin{aligned}
	\left|\sum\limits_{i\not=j}^n\SSS_{T,i,j}^2\frac{S_{T,j,j}-\SSS_{T,j,j}}{\SSS_{T,j,j}S_{T,i,i}S_{T,j,j}}\right|
	&=\left|\sum\limits_{i\not=j}^n\frac{\SSS_{T,i,j}^2}{\SSS_{T,j,j}S_{T,i,i}S_{T,j,j}}\cdot\left(\frac2T\VV_i^\prime W_i-\frac1TW_i^\prime W_i\right)\right|\\
	&\leq \sum\limits_{i\not=j}^n\frac{\SSS_{T,i,j}^2}{\SSS_{T,j,j}S_{T,i,i}S_{T,j,j}}\cdot\left|\frac2T\VV_i^\prime W_i-\frac1TW_i^\prime W_i\right|\\
	& =  O_p(n^{\frac{1}{2r_1}+\epsilon-1/2})\cdot \frac1T \sum\limits_{i\not=j}^n\SSS_{T,i,j}^2.
	\end{aligned}$$
	By Lemma~\ref{lem:trace2},  $\frac1T\sum\limits_{i\not=j}^n\SSS_{T,i,j}^2=O_p(1)$. 
	Therefore, the first term has the order of $O_p(n^{\frac{1}{2r_1}+\epsilon-1/2})$.
	
	The second term is estimated similarly with a same order, so that
	we conclude that $\sum\limits_{i\not=j}^n A_{ij}=O_p(n^{\frac{1}{2r_1}+\epsilon_1-1/2})\cvp0$.

	Next, $\sum\limits_{i\not=j}^nB_{ij}=o_p(1)$ holds with Lemma \ref{lem:trace2}.	
\end{proof}

\begin{proposition}\label{PET_prop}
	Under Assumption (B1)-(B2),  we have
	\[ tr\left(\hat{R}_T^4-R_T^4\right) =o_p(1).
	\]
\end{proposition}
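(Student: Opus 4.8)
The plan is to reproduce, at fourth order, the two-part strategy used for Proposition~\ref{trace2_prop}. Writing $\hat S_{ij}:=\hat S_{T,i,j}$ and $S_{ij}:=S_{T,i,j}$ and using that $\hat R_T$ and $R_T$ are correlation matrices with unit diagonal, I would start from $tr(\hat R_T^4)=\sum_{i,j,l,s}\hat S_{ij}\hat S_{jl}\hat S_{ls}\hat S_{si}/(\hat S_{ii}\hat S_{jj}\hat S_{ll}\hat S_{ss})$ and the analogous identity for $tr(R_T^4)$ with the error covariances, the four indices ranging freely (coincidences included). Inserting the hybrid quantity $\Xi=\sum_{i,j,l,s}\hat S_{ij}\hat S_{jl}\hat S_{ls}\hat S_{si}/(S_{ii}S_{jj}S_{ll}S_{ss})$, which keeps the residual off-diagonals but normalises by the \emph{error} diagonals, I would split $tr(\hat R_T^4)-tr(R_T^4)=A+B$, where $B=\Xi-tr(R_T^4)$ isolates the change in the off-diagonal \emph{numerators} and $A=tr(\hat R_T^4)-\Xi$ isolates the change in the diagonal \emph{normalisers}.

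The term $B$ is the easy half. Since every diagonal entry converges uniformly to $\sigma^2>0$ (Lemma~\ref{lem:estimates}(c),(e)), I would replace the weight $1/(S_{ii}S_{jj}S_{ll}S_{ss})$ by $\sigma^{-8}$ up to a uniformly negligible relative error. The leading piece $\sigma^{-8}\sum_{i,j,l,s}\{\hat S_{ij}\hat S_{jl}\hat S_{ls}\hat S_{si}-S_{ij}S_{jl}S_{ls}S_{si}\}$ then vanishes by Corollary~\ref{cor:diff}(c): by symmetry of $\hat S$ and $S$ its four-cycle pattern is exactly the pattern $\hat S_{ij}\hat S_{jl}\hat S_{is}\hat S_{ls}$ treated there. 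Contributions from coincident indices collapse to the lower patterns already handled by Corollary~\ref{cor:diff}(b) and Lemma~\ref{lem:equal index}(b).

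The term $A$ is where the real work lies. I would telescope the four diagonal factors so that each resulting summand carries a single diagonal deviation $S_{aa}-\hat S_{aa}=T^{-1}(2\inn{\VV_a}{W_a}-\inn{W_a}{W_a})$ against a four-cycle of residual off-diagonals. The naive estimate of Proposition~\ref{trace2_prop} does not survive here: the off-diagonal mass $\sum_{i,j,l,s}|\hat S_{ij}\hat S_{jl}\hat S_{ls}\hat S_{si}|$ is of order $n^2$, so taming it requires a factor $T^{-2}$, whereas the diagonal deviation only supplies $T^{-1}$; the uniform rate $O_p(n^{1/(2r_1)+\epsilon-3/2})$ from Lemma~\ref{lem:estimates}(d) therefore leaves a gap of roughly half a power of $n$ and absolute-value bounds diverge. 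I would close this gap by not taking absolute values: extract $\hab-\beta$ from $W_a=X_a'(\hab-\beta)$, recognise the remaining object as a degree-four U-statistic in $\VV_i,\VV_j,\VV_l,\VV_s$, and control its centred fluctuation by the U-statistic central limit theorem, exactly as for $L_n,O_n,Q_n$ in the proof of Lemma~\ref{lem:trace2}(c). The $\sqrt n$-consistency of the U-statistic, combined with the prefactor $T^{-1}$ and with $\|\hab-\beta\|=O_p((nT)^{-1/2})$, then yields $o_p(1)$. The residual telescoping terms, which contain an $\inn{W_a}{W_a}$ factor or two changed diagonals, carry extra negative powers of $n$ and are disposed of directly through Lemma~\ref{lem:estimates}.

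The main obstacle is thus the normalisation change $A$. The signed cancellation furnished by the U-statistic CLT is genuinely unavoidable: in the quadratic case of Proposition~\ref{trace2_prop} one had $T^{-1}\sum_{i\ne j}\hat S_{ij}^2=O_p(1)$ and crude bounds sufficed, but at fourth order the off-diagonal mass $O_p(n^2)$ is one power of $n$ too large for the diagonal rate to absorb, so the half-power deficit must be recovered from the H\'ajek projection of the associated U-statistic. This is precisely the fourth-order counterpart of the delicate $M_n$ argument, and it reuses the moment identities and the U-statistic central limit theorem already assembled in Lemma~\ref{lem:trace2}(c).
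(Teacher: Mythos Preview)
Your split $tr(\hat R_T^4)-tr(R_T^4)=A+B$ via the hybrid $\Xi$ mirrors the paper's proof: a denominator-change piece plus a numerator-change piece, the latter dispatched by Corollary~\ref{cor:diff}. The paper additionally separates the four-cycle into a three-index squared pattern $\sum_{i,j,l}\hat\rho_{ij}^2\hat\rho_{jl}^2$ and a genuine four-index cross term before doing the same $A/A^\ast$ split, but this is cosmetic.

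The substantive divergence is in the denominator-change term. The paper telescopes exactly as you describe, then bounds each summand by the uniform diagonal rate times the \emph{absolute} four-cycle mass $\frac{1}{T}\sum_{i,j,l,s}|\hat S_{ij}\hat S_{jl}\hat S_{is}\hat S_{ls}|$, which it declares $O_p(1)$ ``by Corollary~\ref{cor:diff}''. Your diagnosis that this absolute sum is $\Theta(n^2)$ (the entrywise-absolute-value matrix carries a rank-one bias of order $\sqrt n$) is correct, so the paper's inequality as written yields only $O_p(n^{1/2+\epsilon})$; you have spotted a genuine gap in the paper's argument. That said, your U-statistic remedy is heavier than needed. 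A lighter repair, in the paper's own spirit, is to take absolute values only on the diagonal factor and keep the \emph{signed} inner sum over the remaining three indices, which collapses to $(\hat S^4)_{ii}$; since $\|\hat S\|_{\mathrm{op}}$ is bounded under $H_0$, one gets $\sum_i|S_{ii}-\hat S_{ii}|\cdot(\hat S^4)_{ii}\le\max_i|S_{ii}-\hat S_{ii}|\cdot tr(\hat S^4)=O_p(n^{-3/2+\epsilon})\cdot O_p(n)=o_p(1)$. One caveat on your own plan: in your $A$ the four-cycle carries \emph{residual} entries $\hat S_{ij}$, which depend on $\hab-\beta$ and hence on all of $\VV_1,\ldots,\VV_n$, so the object is not a U-statistic with a deterministic kernel in $(\VV_i,\VV_j,\VV_l,\VV_s)$. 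To make the CLT go through cleanly you should reverse the hybrid (error numerators, residual denominators) so that the denominator-change step sees $S_{ij}S_{jl}S_{ls}S_{si}$ in the numerator; this is a one-line fix but should be stated.
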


\begin{proof}
	It is easy to verify that
	$$\begin{aligned}
	tr\left(\hat{R}_T^4-R_T^4\right)&=\sum\limits_{i,j,l}^n\left( \frac{\SSS_{T,i,j}^2\SSS_{T,j,l}^2}{\SSS_{T,i,i}\SSS_{T, j,j}^2\SSS_{T,l,l}}-\frac{S_{T,i,j}^2 S_{T,j,l}^2}{S_{T,i,i}S_{T, j,j}^2S_{T,l,l}}\right)\\
	&+2\sum\limits_{i,j,l}^n\sum\limits_{s>l}^n \left(\frac{\SSS_{T,i,j}\SSS_{T,j,l}\SSS_{T,i,s}\SSS_{T,s,l}}{\SSS_{T,i,i}\SSS_{T, j,j}\SSS_{T,l,l}\SSS_{T,s,s}}- \frac{S_{T,i,j}S_{T,j,l}S_{T,i,s}S_{T,s,l}}{S_{T,i,i}S_{T, j,j}S_{T,l,l}S_{T,s,s}} \right)\\
	&=\sum\limits_{i,j,l}^n \left(A_{ijl}+A^\ast_{ijl}\right)\nonumber +2\sum\limits_{i,j,l}^n\sum\limits_{s>l}^n \left(B_{ijls}+B^\ast_{ijls}\right)
	\end{aligned}$$  
	where
	
	\[A_{ijl} = \frac{\SSS_{T,i,j}^2\SSS_{T,j,l}^2}{\SSS_{T,i,i}\SSS_{T, j,j}^2\SSS_{T,l,l}}-\frac{\SSS_{T,i,j}^2 \SSS_{T,j,l}^2}{S_{T,i,i}S_{T, j,j}^2S_{T,l,l}},\quad A^\ast_{ijl} = \frac{\SSS_{T,i,j}^2 \SSS_{T,j,l}^2-S_{T,i,j}^2 S_{T,j,l}^2}{S_{T,i,i}S_{T, j,j}^2S_{T,l,l}}, \]
	\[B_{ijls} = \frac{\SSS_{T,i,j}\SSS_{T,j,l}\SSS_{T,i,s}\SSS_{T,s,l}}{\SSS_{T,i,i}\SSS_{T, j,j}\SSS_{T,l,l}\SSS_{T,s,s}}- \frac{\SSS_{T,i,j}\SSS_{T,j,l}\SSS_{T,i,s}\SSS_{T,s,l}}{S_{T,i,i}S_{T, j,j}S_{T,l,l}S_{T,s,s}}, \] 
	\[B^\ast_{ijls}= \frac{\SSS_{T,i,j}\SSS_{T,j,l}\SSS_{T,i,s}\SSS_{T,s,l}-S_{T,i,j}S_{T,j,l}S_{T,i,s}S_{T,s,l}}{S_{T,i,i}S_{T, j,j}S_{T,l,l}S_{T,s,s}}. \]

	We calculate $\sum\limits_{i\neq j}^n A_{ij}$ first. Note that
	\[A_{ijl} = \SSS_{T,i,j}^2\SSS_{T,j,l}^2 \left( \frac{S_{T,i,i}-\SSS_{T,i,i}}{\SSS_{T,i,i}\SSS_{T, j,j}^2\SSS_{T,l,l}S_{T,i,i}} + \frac{S_{T,j,j}^2-\SSS_{T, j,j}^2}{\SSS_{T, j,j}^2\SSS_{T,l,l}S_{T,i,i}S_{T,j,j}^2} +\frac{S_{T,l,l}-\SSS_{T,l,l}}{\SSS_{T,l,l}S_{T,i,i}S_{T,j,j}^2S_{T,l,l}} \right). \]
	For the first term, using Lemma~\ref{lem:estimates}
	$$\begin{aligned}
	\left| \sum_{i, j, l}^n \SSS_{T,i,j}^2\SSS_{T,j,l}^2 \frac{S_{T,i,i}-\SSS_{T,i,i}}{\SSS_{T,i,i}\SSS_{T, j,j}^2\SSS_{T,l,l}S_{T,i,i}} \right|&= \left|\sum_{i, j, l}^n \frac{\SSS_{T,i,j}^2\SSS_{T,j,l}^2}{\SSS_{T,i,i}\SSS_{T, j,j}^2\SSS_{T,l,l}S_{T,i,i}} \cdot \left(\frac2T\VV_i^\prime W_i-\frac1TW_i^\prime W_i\right)   \right| \\
	& \leq \sum_{i, j, l}^n \frac{\SSS_{T,i,j}^2\SSS_{T,j,l}^2}{\SSS_{T,i,i}\SSS_{T, j,j}^2\SSS_{T,l,l}S_{T,i,i}} \cdot \left|\frac2T\VV_i^\prime W_i-\frac1TW_i^\prime W_i \right| \\
	& = O_p(n^{\frac{1}{2r_1}+\epsilon-1/2}) \cdot \frac1T \sum_{i, j, l}^n\SSS_{T,i,j}^2\SSS_{T,j,l}^2.
	\end{aligned}$$
	By Corollary \ref{cor:diff}, $\frac1T \sum_{i, j, l}^n\SSS_{T,i,j}^2\SSS_{T,j,l}^2=O_p\left(1 \right)$. Therefore, the first term has the order of $O_p\left( n^{\frac{1}{2r_1}+\epsilon-1/2}\right)$.
	
	For the second term, 
	$$\begin{aligned}
	\left| \sum_{i, j, l}^n \SSS_{T,i,j}^2\SSS_{T,j,l}^2 \frac{S_{T,j,j}^2-\SSS_{T, j,j}^2}{\SSS_{T,i,i}\SSS_{T, j,j}^2\SSS_{T,l,l}S_{T,i,i}} \right|&= \left|\sum_{i, j, l}^n \frac{\SSS_{T,i,j}^2\SSS_{T,j,l}^2}{\SSS_{T,i,i}\SSS_{T, j,j}^2\SSS_{T,l,l}S_{T,i,i}} \cdot \left(S_{T,j,j}^2-\SSS_{T, j,j}^2\right)   \right|\\
	&\leq \sum_{i, j, l}^n \frac{\SSS_{T,i,j}^2\SSS_{T,j,l}^2}{\SSS_{T,i,i}\SSS_{T, j,j}^2\SSS_{T,l,l}S_{T,i,i}} \left|S_{T,j,j}^2-\SSS_{T, j,j}^2 \right|\\
	&= O_p\left(n^{\frac{1}{2r_1}+\epsilon-3/2} \right)\cdot \frac1T \sum_{i, j, l}^n\SSS_{T,i,j}^2\SSS_{T,j,l}^2\\
	&= O_p\left(n^{\frac{1}{2r_1}+\epsilon-3/2} \right).
	\end{aligned}$$

	The third term is estimated similarly with the first term with a same order, so that we conclude that $\sum_{i, j, l}^n A_{ijl}=O_p\left(  n^{\frac{1}{2r_1}+\epsilon-1/2}\right)+O_p\left(n^{\frac{1}{2r_1}+\epsilon-3/2} \right)\cvp0$.
	
	Next, $\sum_{i, j, l}^n A^\ast_{ijl}=o_p(1)$ holds with Corollary \ref{cor:diff}.
	
	Then we calculate $\sum\limits_{i,j,l}^n\sum\limits_{s>l}^n B_{ijls}$. Note that $$\begin{aligned}
	B_{ijls} =& \SSS_{T,i,j}\SSS_{T,j,l}\SSS_{T,i,s}\SSS_{T,l,s}\Bigg( \frac{S_{T,i,i}-\SSS_{T,i,i}}{\SSS_{T,i,i}\SSS_{T,j,j}\SSS_{T,l,l}\SSS_{T,s,s}S_{T,i,i}} +\frac{S_{T,j,j}-\SSS_{T,j,j}}{\SSS_{T,j,j}\SSS_{T,l,l}\SSS_{T,s,s}S_{T,i,i}S_{T,j,j}}\\
	& \hskip4cm \frac{S_{T,l,l}-\SSS_{T,l,l}}{\SSS_{T,l,l}\SSS_{T,s,s}S_{T,i,i}S_{T,j,j}S_{T,l,l}} +\frac{S_{T,s,s}-\SSS_{T,s,s}}{\SSS_{T,s,s}S_{T,i,i}S_{T,j,j}S_{T,l,l}S_{T,s,s}} \Bigg)
	\end{aligned}$$
	For the first term, we have
	$$\begin{aligned}
	\left|\sum\limits_{i,j,l}^n\sum\limits_{s>l}^n \SSS_{T,i,j}\SSS_{T,j,l}\SSS_{T,i,s}\SSS_{T,l,s}\frac{S_{T,i,i}-\SSS_{T,i,i}}{\SSS_{T,i,i}\SSS_{T,j,j}\SSS_{T,l,l}\SSS_{T,s,s}S_{T,i,i}} \right|\\\leq \sum\limits_{i,j,l}^n\sum\limits_{s>l}^n \frac{\left|\SSS_{T,i,j}\SSS_{T,j,l}\SSS_{T,i,s}\SSS_{T,l,s}\right|}{\SSS_{T,i,i}\SSS_{T,j,j}\SSS_{T,l,l}\SSS_{T,s,s}S_{T,i,i}}\cdot \left|\frac2T\VV_i^\prime W_i-\frac1TW_i^\prime W_i \right|\\
	= O_p\left(n^{\frac{1}{2r_1}+\epsilon-1/2} \right)\cdot \frac{1}{T} \sum\limits_{i,j,l}^n\sum\limits_{s>l}^n\left|\SSS_{T,i,j}\SSS_{T,j,l}\SSS_{T,i,s}\SSS_{T,l,s}\right|
	\end{aligned}$$
	By Corollary \ref{cor:diff}, $\frac{1}{T} \sum\limits_{i,j,l}^n\sum\limits_{s>l}^n\left|\SSS_{T,i,j}\SSS_{T,j,l}\SSS_{T,i,s}\SSS_{T,l,s}\right|=O_p(1)$, so the first term has the order of $ O_p\left(n^{\frac{1}{2r_1}+\epsilon-1/2} \right)$. The remaining terms are estimated similarly with same order, so that we conclude that $\sum\limits_{i,j,l}^n\sum\limits_{s>l}^n B_{ijls}\cvp 0$.
	
	Finally, $\sum\limits_{i,j,l}^n\sum\limits_{s>l}^n B^\ast_{ijls}=o_p(1)$ holds with Corollary \ref{cor:diff}. The proof is complete.
\end{proof}

\end{document}